\documentclass[pdflatex,sn-mathphys-num]{sn-jnl}

\usepackage{graphicx}%
\usepackage{multirow}%
\usepackage{amsmath,amssymb,amsfonts}%
\usepackage{amsthm}%
\usepackage{mathrsfs}%
\usepackage[title]{appendix}%
\usepackage{xcolor}%
\usepackage{textcomp}%
\usepackage{manyfoot}%
\usepackage{booktabs}%
\usepackage{algorithm}%
\usepackage{algcompatible}%
\usepackage{algorithmicx}%
\usepackage{algpseudocode}%
\usepackage{listings}%
\usepackage{braket}%
\usepackage{float}%
\usepackage{dblfloatfix}%
\usepackage{placeins}%
\usepackage{tabularx}%
\usepackage{booktabs}%
\usepackage{array}%

\theoremstyle{thmstyleone}%
\newtheorem{theorem}{Theorem}
\newtheorem{lemma}{Lemma}%

\begin{document}

\title[Article Title]{Capability-Adaptive Cryptanalysis with Reduced-Space Quantum Verification}


\author*[1]{\fnm{Nivedita} \sur{Dey}}\email{mail.dnivedita@gmail.com}

\author[1,2]{\fnm{Mrityunjay} \sur{Ghosh}}\email{mrityunjay.ghosh@hcl.com}
\equalcont{These authors contributed equally to this work.}
\author[2]{\fnm{Pranav} \sur{Kaushal}}\email{pranav\_k@hcl-software.com}

\author[2]{\fnm{Abhinab} \sur{Khare}}\email{abhinavkhare@hcl.com}

\author[1]{\fnm{Amlan} \sur{Chakrabarti}}\email{acakcs@caluniv.ac.in}

\affil*[1]{\orgdiv{AKCSIT}, \orgname{University of Calcutta}, \orgaddress{\city{Kolkata}, \postcode{70106}, \state{West Bengal}, \country{India}}}

\affil[2]{\orgdiv{HCL Software}, \orgname{HCL}, \orgaddress{\city{Noida}, \postcode{201303}, \state{Uttar Pradesh}, \country{India}}}


\abstract{Efficient integration of cryptanalytic evidence with quantum verification remains a fundamental challenge in hybrid classical-quantum cryptanalysis. This work presents a capability-adaptive cryptanalytic framework that unifies linear cryptanalysis, differential cryptanalysis, and side-channel leakage analysis within a common candidate-space reduction architecture, followed by reduced-space quantum verification through amplitude amplification. A formal mathematical model is developed for candidate-space construction, adaptive filtering, verification-space reduction, and complexity characterization, supported by theoretical results establishing the relationship between candidate-space contraction and quantum verification effort. A Hamiltonian formulation is further introduced to provide a physically realizable interpretation of the reduced-space verification process. Evaluation using statistically generated cryptanalytic observations demonstrates that the proposed framework reduces an initial candidate-key hypothesis space of 4096 candidates to an effective candidate space of 13 hypotheses, corresponding to an overall reduction of approximately 99.683\%. Consequently, the Grover verification requirement decreases from 50 iterations to only 2 iterations, yielding an approximately 25-fold reduction in verification effort, while reduced-space amplitude amplification achieves a target-state success probability of approximately 94.53\%. These results demonstrate that adaptive cryptanalytic filtering can substantially reduce quantum verification complexity while preserving cryptanalytic admissibility, providing a practical foundation for capability-aware hybrid cryptanalysis and reduced-space quantum search.}

\keywords{Capability-Adaptive Cryptanalysis, Quantum Cryptanalysis, Reduced-Space Quantum Verification, Grover Search, Quantum Attack}



\maketitle

\section{Introduction}\label{sec1}

The continued reliance on symmetric-key cryptographic systems across digital infrastructures, cloud platforms, communication networks, financial services, and embedded environments underscores the importance of understanding both their security guarantees and their practical attack surfaces. Although modern symmetric ciphers are designed to withstand exhaustive cryptanalysis under well-defined adversarial assumptions, the effectiveness of real-world cryptanalytic efforts is often determined not only by the strength of the underlying algorithm but also by the ability to efficiently identify, prioritize, and verify candidate key hypotheses from large search spaces.\cite{bib1} \cite{bib3} Consequently, contemporary cryptanalysis increasingly extends beyond exhaustive exploration and incorporates statistical inference, leakage exploitation, and data-driven reduction mechanisms that progressively constrain the set of admissible key candidates before final verification is performed. \cite{bib1}

The emergence of quantum computing further broadens this landscape by introducing computational models capable of accelerating specific classes of search and verification problems. \cite{bib7} While symmetric cryptographic systems are generally regarded as more resilient to quantum adversaries than many public-key constructions, the availability of quantum search procedures alters the effective cost associated with candidate-key verification. \cite{bib4} \cite{bib5} At the same time, practical cryptanalytic workflows rarely operate on the complete cryptographic key space. Instead, cryptanalysts frequently rely on statistical distinguishers, differential characteristics, linear approximations, leakage observations, or combinations thereof to eliminate large portions of the search space prior to exhaustive evaluation. This observation suggests that the effectiveness of a cryptanalytic process is influenced not only by the verification mechanism itself but also by the quality and structure of the candidate space presented to that mechanism.

A closely related consideration arises from the manner in which cryptanalytic observations are selected and processed. Differential attacks depend on representative plaintext pairs, linear attacks depend on statistically meaningful observations, and side-channel analyses depend on informative leakage traces. \cite{bib17} In each case, the quality of the underlying observation set directly affects the efficiency of the analytical process. As cryptanalytic datasets continue to grow in scale and complexity, observation-space management becomes increasingly important for reducing computational overhead while preserving analytical effectiveness. Consequently, candidate-space reduction and observation-space reduction may be viewed as complementary objectives that jointly influence the overall efficiency of a cryptanalytic framework.

Recent advances in quantum randomness generation introduce an additional perspective on this problem. Unlike deterministic pseudo-random processes, Quantum Random Number Generators (QRNGs) derive entropy from intrinsically probabilistic quantum phenomena and have therefore attracted significant attention as high-quality randomness sources for security-sensitive applications. \cite{bib14} Beyond their traditional role in cryptographic key generation and secure communication systems, quantum-generated randomness offers potential advantages for unbiased observation selection and statistically robust sampling procedures. Although such mechanisms do not alter the cryptographic strength of a symmetric cipher, they may contribute to the construction of representative observation subsets and reduce selection bias within cryptanalytic evaluation processes. This observation motivates the exploration of quantum-assisted sampling as a supporting component within broader cryptanalytic workflows.

Motivated by these considerations, this work investigates a unified cryptanalytic framework that combines adaptive candidate-space reduction, capability-driven attack selection, reduced-space quantum verification, and quantum-assisted observation sampling within a single analytical workflow. \cite{bib15} Rather than treating statistical cryptanalysis, leakage-guided reduction, quantum verification, and observation selection as isolated stages, the proposed framework considers their interaction and collective influence on candidate-key identification. \cite{bib10} The resulting formulation provides a structured mechanism for integrating multiple cryptanalytic perspectives while maintaining a consistent mathematical representation of candidate-space evolution and verification complexity.\cite{bib2}

The remainder of this paper is organized as follows. Section~2 reviews existing research related to quantum cryptanalysis, statistical cryptanalytic techniques, observation-space reduction, and quantum-generated randomness. \cite{bib11} \cite{bib15} Section~3 develops a unified mathematical framework for candidate-space construction and reduction using linear, differential, and leakage-guided cryptanalytic models. Section~4 introduces the adaptive hybrid cryptanalytic architecture, presents its theoretical formulation, derives the associated complexity bounds, and develops a reduced-space quantum verification procedure. Section~5 examines the role of quantum-assisted observation sampling and discusses its integration within cryptanalytic workflows. Finally, Section~6 summarizes the principal findings and outlines directions for future investigation.

\section{Literature Review}\label{sec2}

The security of modern digital infrastructures continues to rely heavily on symmetric-key cryptographic mechanisms owing to their computational efficiency and suitability for large-scale data protection. Among these mechanisms, the Advanced Encryption Standard (AES) has emerged as the dominant encryption standard across communication networks, cloud infrastructures, financial systems, and resource-constrained environments. Comparative studies involving AES, DES, and 3DES consistently demonstrate the favorable balance between security, performance, and implementation efficiency achieved by AES, leading to its widespread adoption in both commercial and security-critical applications \cite{bib1}. Contemporary cryptographic research, however, increasingly recognizes that practical security is influenced not only by algorithmic design but also by implementation characteristics, attack models, and available computational capabilities.

The rapid advancement of quantum computing has introduced a new threat model for cryptographic systems. While Shor's algorithm primarily affects public-key cryptography, Grover's algorithm provides a quadratic speedup for exhaustive key-search problems and therefore directly impacts the effective security margin of symmetric encryption schemes. This observation has motivated extensive investigations into quantum cryptanalysis of block ciphers, including quantum oracle constructions, quantum key-search procedures, reversible AES implementations, and reduced-complexity quantum search strategies \cite{bib2} \cite{bib6} \cite{bib10}. Existing studies indicate that symmetric cryptographic algorithms remain comparatively resilient in the near term, although their effective security level may be reduced under sufficiently powerful quantum adversaries. \cite{bib9} Consequently, considerable attention has been directed toward cryptanalytic techniques capable of reducing the practical effort required for candidate-key identification and verification.

Beyond exhaustive search, statistical cryptanalysis remains one of the principal approaches for evaluating cipher security. \cite{bib13} Differential cryptanalysis, linear cryptanalysis, and side-channel analysis exploit statistical distinguishers, probability distributions, leakage information, and observable execution characteristics to eliminate inconsistent key hypotheses prior to final verification. \cite{bib1} \cite{bib3} These methodologies demonstrate that cryptanalysis can be interpreted as a progressive candidate-space reduction process, where successive analytical stages iteratively contract the admissible key space. Such reduction-oriented perspectives have become increasingly important in modern attack frameworks because they enable computational resources to be concentrated on statistically promising key hypotheses.

The efficiency of practical cryptanalysis is also strongly influenced by the construction and processing of the observation space. Several studies have shown that complete evaluation of all available observations is often computationally unnecessary, particularly in large-scale statistical analyses. Sampling-based cryptanalysis, feature-guided evaluation, machine-learning-assisted differential analysis, and observation-selection strategies demonstrate that carefully selected subsets of observations can preserve analytical effectiveness while substantially reducing computational overhead \cite{bib21}. These findings suggest that observation reduction and candidate-space contraction constitute complementary mechanisms for improving the scalability of cryptanalytic workflows, especially when large volumes of statistical data must be processed.

In parallel, significant research efforts have focused on the generation and utilization of high-quality randomness within security-critical systems. Quantum Random Number Generators (QRNGs) exploit fundamentally probabilistic quantum phenomena to generate entropy that is not derived from deterministic computational processes. \cite{bib18} Consequently, QRNGs have been investigated for secure communication infrastructures, authentication frameworks, post-quantum security architectures, key-generation mechanisms, and entropy-enhanced cryptographic applications. \cite{bib12} The resulting quantum-generated randomness exhibits desirable statistical properties that can support applications requiring unbiased selection, uncertainty-driven decision making, and statistically robust sampling procedures.

The statistical characteristics of quantum-generated randomness motivate its consideration beyond conventional key-generation scenarios. Observation selection, trace selection, pair selection, and statistical estimation frequently influence the effectiveness of cryptanalytic procedures, particularly in differential, linear, and leakage-based analyses. Recent studies on entropy characterization, randomness evaluation, and quantum-random sequence analysis suggest that quantum-generated observations may provide advantageous properties for unbiased sampling and statistically representative observation construction. \cite{bib8} Although the integration of QRNGs into cryptographic infrastructures has been extensively explored, their potential role in adaptive cryptanalytic observation selection and statistical evaluation remains comparatively underexamined.

Collectively, the existing literature demonstrates substantial progress in four closely related research directions: quantum cryptanalysis of symmetric cryptographic systems, statistical and leakage-based cryptanalytic attacks, observation-space reduction techniques, and quantum-generated randomness. \cite{bib22} Nevertheless, these directions have largely evolved independently. \cite{bib1} \cite{bib2} Existing studies typically investigate cryptanalytic reduction, quantum search acceleration, sampling-based optimization, or QRNG-enabled security enhancement in isolation. \cite{bib11} Limited attention has been devoted to a unified framework that systematically combines adaptive cryptanalytic filtering, capability-driven attack selection, reduced candidate-key construction, reduced-space quantum verification, and quantum-assisted observation sampling within a single analytical workflow. Addressing this integration gap motivates the framework developed in the subsequent sections, where classical cryptanalytic reduction, adaptive candidate-space contraction, quantum verification, and QRNG-assisted observation sampling are integrated into a coherent hybrid cryptanalytic architecture.

\section{Unified Classical-Quantum Cryptanalytic Model and Baseline Candidate-Space Construction}

The proposed framework models a symmetric-key cryptographic primitive as

\begin{equation}
\Pi=
(\mathcal P_n,\mathcal C_n,\mathcal K_k,\mathcal E_K,\mathcal D_K),
\label{eq:crypto_model}
\end{equation}

where $\mathcal P_n=\{0,1\}^{n}$, $\mathcal C_n=\{0,1\}^{n}$, and $\mathcal K_k=\{0,1\}^{k}$ denote the plaintext space, ciphertext space, and secret-key space, respectively. The encryption and decryption mappings satisfy the correctness condition

\begin{equation}
\mathcal D_K\!\left(\mathcal E_K(P)\right)=P,
\qquad
\forall P\in\mathcal P_n,
\quad
\forall K\in\mathcal K_k.
\label{eq:correctness}
\end{equation}

The formulation in (\ref{eq:crypto_model}) and (\ref{eq:correctness}) is intentionally cipher-agnostic and therefore remains applicable to substitution-permutation networks, Feistel constructions, and hybrid symmetric-key architectures. To establish a unified analytical environment reused throughout all subsequent cryptanalytic procedures, plaintext and ciphertext differences are represented by $\Delta P=P_1\oplus P_2$ and $\Delta C=C_1\oplus C_2$, respectively. Differential distinguishability is characterized through the differential propagation operator

\begin{equation}
DP(\Delta P,\Delta C)
=
\Pr
\Big[
\mathcal E_K(P_1)
\oplus
\mathcal E_K(P_1\oplus\Delta P)
=
\Delta C
\Big].
\label{eq:differential_probability}
\end{equation}

Similarly, linear distinguishers are represented through the Boolean approximation mapping

\begin{equation}
\Gamma:
\mathcal P_n
\times
\mathcal C_n
\times
\mathcal K_k
\rightarrow
\{0,1\},
\label{eq:linear_mapping}
\end{equation}

with statistical bias

\begin{equation}
\epsilon
=
\left|
\Pr[\Gamma(P,C,K)=0]
-\frac12
\right|.
\label{eq:linear_bias}
\end{equation}

Implementation-level leakage is modeled through the generalized observation process

\begin{equation}
L(P,K,t)
=
f(P,K)+\eta(t),
\label{eq:leakage_model}
\end{equation}

where $f(P,K)$ denotes deterministic key-dependent leakage and $\eta(t)$ denotes stochastic environmental noise. Consequently, the framework simultaneously captures timing leakage, power-analysis leakage, electromagnetic emanations, cache-observation channels, and microarchitectural side-channel effects within a common abstraction. \cite{bib16} For quantum-enabled adversarial analysis, the computational environment is extended into the Hilbert space $H=\mathbb C^{2^m},\label{eq:hilbert_space}$ where $m$ denotes the total qubit count. Candidate-key states, plaintext states, ciphertext states, and auxiliary workspaces are represented by the quantum registers $\ket K$, $\ket P$, $\ket C$, and $\ket A$, respectively. The corresponding verification oracle is defined through $U_f\ket K = (-1)^{f(K)}\ket K,
\label{eq:oracle_definition}$ where $f(K) = \begin{cases} 1, & \mathcal E_K(P)=C, \\0, &
\text{otherwise}.\end{cases}\label{eq:oracle_function}$

The resulting mathematical environment is reused throughout candidate-space construction, linear cryptanalysis, differential cryptanalysis, leakage-guided filtering, and reduced-space quantum amplification without requiring subsequent symbol redefinition. \cite{bib4} \cite{bib6} Rather than directly targeting secret-key recovery, the proposed framework first constructs a consistency-preserving candidate-key subset. Given an observed plaintext-ciphertext pair $(P,C)$, the objective is to identify $\mathcal K_{\mathrm{cand}}
= \Big\{K_i\in\mathcal K_k:\mathcal E_{K_i}(P)=C \Big\},\label{eq:candidate_set}$ which represents the admissible cryptanalytic search space prior to the application of statistical distinguishers and leakage-guided reduction mechanisms. Unlike conventional brute-force formulations that terminate immediately after locating a valid key hypothesis, the proposed formulation preserves the complete candidate subset because all subsequent cryptanalytic modules operate directly on $\mathcal K_{\mathrm{cand}}$.

The procedure CandidateSpaceConstruction$(P,C,\mathcal K_k,\mathcal D)$ constructs the initial admissible key-hypothesis space by exhaustively evaluating cryptographic consistency over the complete key domain. Each candidate key is independently verified against the observed plaintext-ciphertext pair and all consistency-preserving hypotheses are accumulated into the candidate subset $\mathcal K_{\mathrm{cand}}$. The resulting subset serves as the parent search space subsequently consumed by linear, differential, leakage-guided, and quantum-amplified reduction procedures.

\begin{algorithm}[H]
\caption{CandidateSpaceConstruction$(P,C,\mathcal K_k,\mathcal D)$}
\label{alg:candidate_space_construction}

\begin{algorithmic}[1]

\State Initialize
\State $\mathcal K_{\mathrm{cand}}\gets\emptyset$

\ForAll{$K_i\in\mathcal K_k$}

    \State $P_i' \gets \mathcal D_{K_i}(C)$

    \If{$P_i'=P$}
        \State $\mathcal K_{\mathrm{cand}}
        \gets
        \mathcal K_{\mathrm{cand}}
        \cup
        \{K_i\}$
    \EndIf

\EndFor

\State \Return $\mathcal K_{\mathrm{cand}}$

\end{algorithmic}
\end{algorithm}

\paragraph{Computational complexity analysis}

Let

\begin{equation}
N_K
=
|\mathcal K_k|
=
2^k,
\label{eq:keyspace_size}
\end{equation}

denote the cardinality of the candidate-key space. The initialization phase of Algorithm~\ref{alg:candidate_space_construction} requires only candidate-set allocation and therefore contributes

\begin{equation}
T_1=\theta(1)
\label{eq:init_complexity}
\end{equation}

The dominant computational cost arises from candidate-key enumeration. Since exactly $N_K$ candidate keys are evaluated, the decryption phase performs $P_i'
=\mathcal D_{K_i}(C),\label{eq:decryption_step}$ for every candidate key. Let $T_D(n)$ denote the computational complexity of a single decryption evaluation. The total decryption cost therefore becomes

\begin{equation}
T_3
=
N_K\,T_D(n).
\label{eq:decryption_complexity}
\end{equation}

Similarly, the consistency verification stage performs one plaintext-comparison operation per candidate key. Let $T_V(n)$ denote the corresponding verification complexity. Since comparison requires at most $n$ bit evaluations, $T_V(n)=\theta(n),
\label{eq:verification_complexity}$

yielding

\begin{equation}
T_4
=
N_K\,T_V(n).
\label{eq:total_verification_complexity}
\end{equation}

Combining (\ref{eq:init_complexity}), (\ref{eq:decryption_complexity}), and (\ref{eq:total_verification_complexity}), the total execution time becomes $T_{\mathrm{CSC}} = T_1+T_3+T_4
= \theta(1)+N_K\Big(T_D(n)+T_V(n)\Big). \label{eq:total_complexity}$ For practical symmetric-key cryptographic systems, cryptographic evaluation dominates comparison cost, i.e., $T_D(n)\gg T_V(n),
\label{eq:dominance}$ and therefore $T_{\mathrm{CSC}}
= \theta\Big(N_K\,T_D(n)\Big). \label{eq:simplified_complexity}$ Substituting (\ref{eq:keyspace_size}) into (\ref{eq:simplified_complexity}) yields $T_{\mathrm{CSC}}=\theta\Big(2^k\,T_D(n)\Big).
\label{eq:keyspace_complexity}$ Assuming fixed-round block-cipher evaluation scales linearly with block length, $T_D(n) = \theta(n), \label{eq:block_complexity}$ the resulting complexity becomes $T_{\mathrm{CSC}} = \theta(n\,2^k).
\label{eq:linearized_complexity}$ Since $n\ll 2^k$ for practical cryptographic systems, the asymptotic behavior is dominated by the exponential search term, yielding $T_{\mathrm{CSC}} = \theta(2^k).
\label{eq:final_complexity}$ The corresponding worst-case complexity is $T_{\mathrm{CSC}}=O(2^k),
\label{eq:worst_case}$ while the lower bound becomes $T_{\mathrm{CSC}}=\Omega(1),
\label{eq:best_case}$ under idealized early-success conditions. The memory complexity is determined by the cardinality of the candidate subset,n $M_{\mathrm{CSC}} = O(|\mathcal K_{\mathrm{cand}}|),
\label{eq:memory_complexity}$ which reduces to $O(1)$ for a unique valid key and reaches $O(2^k)$ in the theoretical worst case. Consequently, CandidateSpaceConstruction$(\cdot)$ establishes the baseline computational security bound of the target symmetric-key primitive and serves as the reference complexity against which the subsequent candidate-space reduction procedures and reduced-space quantum amplification framework are evaluated.

\subsection{Statistical candidate-space reduction framework}

Following construction of the admissible candidate-key subset $\mathcal K_{\mathrm{cand}}$ using Algorithm~\ref{alg:candidate_space_construction}, the framework performs progressive candidate-space reduction through statistical distinguishers and implementation-level leakage analysis. Unlike candidate-space construction, which preserves all cryptographically consistent key hypotheses, the present stage attempts to eliminate statistically inconsistent candidates by exploiting measurable deviations from ideal random behaviour. Three complementary reduction mechanisms are considered, namely linear cryptanalysis, differential cryptanalysis, and leakage-guided cryptanalysis. \cite{bib20} \cite{bib21} Each mechanism operates directly on $\mathcal K_{\mathrm{cand}}$ and produces a reduced candidate subset that is subsequently consumed by the adaptive hybrid framework.

\paragraph{Linear candidate-space reduction}

Linear cryptanalysis exploits probabilistic linear relations between plaintext variables, ciphertext variables, and key-dependent intermediate states. Let $\Gamma(P,C,K)$ denote the Boolean approximation function introduced in (\ref{eq:linear_mapping}) and let $\epsilon$ denote the corresponding statistical bias defined in (\ref{eq:linear_bias}). For an ideal random permutation, $\epsilon=0$, whereas non-zero bias values indicate exploitable statistical distinguishers. Given an observed plaintext-ciphertext dataset $\mathcal S = \left\{(P_1,C_1),
(P_2,C_2),\dots,(P_N,C_N)\right\}, \label{eq:linear_dataset}$ each candidate key $K_i\in\mathcal K_{\mathrm{cand}}$ is evaluated against the selected approximation mask. Let $N_i$ denote the number of approximation satisfactions associated with candidate key $K_i$. The empirical approximation bias is therefore estimated as $\hat{\epsilon}_i=\left|\frac{N_i}{N}-\frac12\right|$. Candidate keys satisfying the significance criterion $\hat{\epsilon}_i\ge
\epsilon_{\min}\label{eq:linear_threshold}$ are retained, yielding the reduced candidate subset $\mathcal K_{\mathrm{linear}} = \left\{K_i\in
\mathcal K_{\mathrm{cand}}:\hat{\epsilon}_i\ge \epsilon_{\min}\right\}.\label{eq:klinear}$ The procedure LinearCandidateReduction$(\mathcal K_{\mathrm{cand}},\mathcal S,\Gamma,\epsilon_{\min})$ evaluates the statistical consistency of every candidate key with respect to the selected linear approximation and retains only those hypotheses exhibiting measurable approximation bias above the predefined significance threshold.

\begin{algorithm}[H]
\caption{LinearCandidateReduction$(\mathcal K_{\mathrm{cand}},\mathcal S,\Gamma,\epsilon_{\min})$}
\label{alg:linear_candidate_reduction}

\begin{algorithmic}[1]

\State Initialize
\State $\mathcal K_{\mathrm{linear}}\gets\emptyset$

\ForAll{$K_i\in\mathcal K_{\mathrm{cand}}$}

      \State Evaluate $\Gamma(P_j,C_j,K_i)$ for all $(P_j,C_j)\in\mathcal S$

      \State Compute empirical bias $\hat{\epsilon}_i$

      \If{$\hat{\epsilon}_i\ge\epsilon_{\min}$}

            \State $\mathcal K_{\mathrm{linear}}
            \gets
            \mathcal K_{\mathrm{linear}}
            \cup
            \{K_i\}$

      \EndIf

\EndFor

\State \Return $\mathcal K_{\mathrm{linear}}$

\end{algorithmic}
\end{algorithm}

The resulting procedure transforms linear cryptanalysis from a key-ranking mechanism into a candidate-space filtering mechanism suitable for adaptive hybrid cryptanalysis.

\paragraph{Differential candidate-space reduction}

Differential cryptanalysis evaluates the propagation behaviour of input differences through the cryptographic transformation. Using the notation introduced in (\ref{eq:differential_probability}), let $\Delta P=P_1\oplus P_2$ and $\Delta C=C_1\oplus C_2$ denote the selected plaintext and ciphertext differences, respectively. The corresponding differential propagation probability is represented by $DP(\Delta P,\Delta C) = \Pr\Big[\mathcal E_K(P)
\oplus\mathcal E_K(P\oplus\Delta P) = \Delta C\Big].
\label{eq:dp_reduction}$ Given a chosen-plaintext dataset $\mathcal P_{\Delta} = \left\{ (P_j,P_j\oplus\Delta P)\right\}_{j=1}^{N},
\label{eq:differential_dataset}$ each candidate key is evaluated according to the observed agreement between predicted and measured ciphertext differentials. Let $\widehat{DP}_i$ denote the empirical propagation probability associated with candidate key $K_i$. Candidate keys satisfying $\widehat{DP}_i\ge DP_{\min}
\label{eq:differential_threshold}$ are retained, producing the reduced subset $\mathcal K_{\mathrm{differential}} = \left\{K_i\in\mathcal K_{\mathrm{cand}}:\widehat{DP}_i\ge DP_{\min}\right\}.
\label{eq:kdifferential}$ The procedure DifferentialCandidateReduction$(\mathcal K_{\mathrm{cand}},\mathcal P_{\Delta},\Delta P,\Delta C,DP_{\min})$ evaluates candidate-key consistency with respect to the selected differential characteristic and eliminates hypotheses exhibiting propagation behaviour inconsistent with the expected differential trail.

\begin{algorithm}[H]
\caption{DifferentialCandidateReduction$(\mathcal K_{\mathrm{cand}},\mathcal P_{\Delta},\Delta P,\Delta C,DP_{\min})$}
\label{alg:differential_candidate_reduction}

\begin{algorithmic}[1]

\State Initialize
\State $\mathcal K_{\mathrm{differential}}\gets\emptyset$

\ForAll{$K_i\in K_{\mathrm{cand}}$}

      \State Evaluate differential propagation over all plaintext pairs in $\mathcal P_{\Delta}$

      \State Compute empirical propagation probability $\widehat{DP}_i$

      \If{$\widehat{DP}_i\ge DP_{\min}$}

            \State $\mathcal K_{\mathrm{differential}}
            \gets
            \mathcal K_{\mathrm{differential}}
            \cup
            \{K_i\}$

      \EndIf

\EndFor

\State \Return $\mathcal K_{\mathrm{differential}}$

\end{algorithmic}
\end{algorithm}

The resulting candidate subset $\mathcal K_{\mathrm{differential}}$ contains only those key hypotheses exhibiting differential behaviour compatible with the selected characteristic and therefore provides an additional reduction of the admissible search space prior to leakage-guided analysis.

\paragraph{Leakage-guided candidate-space reduction}

While linear and differential cryptanalysis exploit statistical properties of the cryptographic transformation, leakage-guided cryptanalysis exploits implementation-dependent physical observables.\cite{bib20} Using the generalized leakage model introduced in (\ref{eq:leakage_model}), let $\mathcal L = \left\{
(P_1,\tau_1),(P_2,\tau_2),\dots,(P_N,\tau_N)\right\},
\label{eq:leakage_dataset}$ denote the observed leakage dataset, where $\tau_i$ represents the measured leakage associated with plaintext observation $P_i$. For each candidate key $K_i\in\mathcal K_{\mathrm{cand}}$, a hypothetical leakage vector $H(P,K_i)$ is generated according to the selected leakage model. The statistical agreement between the hypothetical leakage vector and the observed leakage measurements is quantified using the Pearson correlation coefficient

\begin{equation}
\rho_i
=
\rho
\Big(
H(P,K_i),
\mathcal L
\Big).
\label{eq:pearson_correlation}
\end{equation}

Candidate keys satisfying $\rho_i\ge\rho_{\min}
\label{eq:leakage_threshold}$ are retained, yielding
$\mathcal K_{\mathrm{leakage}} = \left\{K_i\in
\mathcal K_{\mathrm{cand}}:\rho_i\ge\rho_{\min}
\right\}.\label{eq:kleakage}$ The procedure LeakageGuidedCandidateReduction$(\mathcal K_{\mathrm{cand}},\mathcal L,H,\rho_{\min})$ evaluates the statistical consistency between observed leakage measurements and hypothetical leakage estimates generated for each candidate key. Only candidate keys exhibiting sufficiently strong correlation with the measured leakage traces are preserved.

\begin{algorithm}[H]
\caption{LeakageGuidedCandidateReduction$(\mathcal K_{\mathrm{cand}},\mathcal L,H,\rho_{\min})$}
\label{alg:leakage_candidate_reduction}

\begin{algorithmic}[1]

\State Initialize
\State $\mathcal K_{\mathrm{leakage}}\gets\emptyset$

\ForAll{$K_i\in\mathcal K_{\mathrm{cand}}$}

      \State Generate hypothetical leakage vector
      $H_i \gets H(P,K_i)$

      \State Compute correlation coefficient $\rho_i$

      \If{$\rho_i\ge\rho_{\min}$}

            \State $\mathcal K_{\mathrm{leakage}}
            \gets
            \mathcal K_{\mathrm{leakage}}
            \cup
            \{K_i\}$

      \EndIf

\EndFor

\State \Return $\mathcal K_{\mathrm{leakage}}$

\end{algorithmic}
\end{algorithm}

The resulting candidate subset $\mathcal K_{\mathrm{leakage}}$ contains only those key hypotheses exhibiting statistically significant agreement with the observed physical leakage behaviour and therefore provides an implementation-aware reduction of the admissible cryptanalytic search space.

\subsection{Complexity analysis}

Let $N_K = |\mathcal K_{\mathrm{cand}}| \label{eq:candidate_cardinality}$ denote the cardinality of the candidate-key subset generated by Algorithm~\ref{alg:candidate_space_construction}, and let $N$ denote the number of observations available for statistical analysis. For Algorithm~\ref{alg:linear_candidate_reduction}, the dominant computational cost arises from repeated evaluation of the approximation function $\Gamma(P_j,C_j,K_i)$. Let $T_{\Gamma}$ denote the cost of evaluating a single approximation mask. Since the approximation consists of a finite number of XOR operations over selected plaintext, ciphertext, and key bits, the evaluation cost remains constant, i.e., $T_{\Gamma}=\theta(1)$. Consequently, the total execution time becomes $T_{\mathrm{linear}} = N_KN
T_{\Gamma}.\label{eq:linear_complexity_1}$ Substituting $T_{\Gamma}=\theta(1)$ yields $T_{\mathrm{linear}} = \theta(N_KN). \label{eq:linear_complexity}$ The corresponding memory complexity is $M_{\mathrm{linear}} = O(N_K).
\label{eq:linear_memory}$ For Algorithm~\ref{alg:differential_candidate_reduction}, each candidate key is evaluated over all plaintext pairs contained in $\mathcal P_{\Delta}$. Let $T_E(n)$ denote the complexity of a single encryption evaluation. Since each differential test requires computation of \[C_1=\mathcal E_{K_i}(P),\qquad
C_2=\mathcal E_{K_i}(P\oplus\Delta P),\] two encryption evaluations are performed per observation. Therefore, $T_{\mathrm{differential}} = 2N_KN\,T_E(n).
\label{eq:differential_complexity_1}$ For practical fixed-round symmetric ciphers, encryption complexity scales linearly with block size, yielding $T_E(n)=\theta(n)$. Consequently, $T_{\mathrm{differential}} = \theta(N_KNn).
\label{eq:differential_complexity}$ The corresponding memory complexity becomes $M_{\mathrm{differential}} =
O(N_K).\label{eq:differential_memory}$ For Algorithm~\ref{alg:leakage_candidate_reduction}, two dominant operations are performed for each candidate key. First, a hypothetical leakage vector is generated over all observations. Let $T_H(N)$ denote the corresponding generation cost. Since one hypothetical leakage value is produced per observation,

\begin{equation}
T_H(N)
=
\theta(N).
\label{eq:hypothetical_cost}
\end{equation}

Second, the Pearson correlation coefficient in (\ref{eq:pearson_correlation}) is evaluated. Let $T_{\rho}(N)$ denote the corresponding computational cost. Since covariance and variance estimation require a single traversal of the observation vectors,

\begin{equation}
T_{\rho}(N)
=
\theta(N).
\label{eq:correlation_cost}
\end{equation}

Therefore, the total execution time becomes $T_{\mathrm{leakage}} = N_K\Big(T_H(N)+T_{\rho}(N)
\Big).\label{eq:leakage_complexity_1}$ Substituting (\ref{eq:hypothetical_cost}) and (\ref{eq:correlation_cost}) yields $T_{\mathrm{leakage}} = \theta(N_KN). \label{eq:leakage_complexity}$ The corresponding memory complexity becomes $M_{\mathrm{leakage}} = O(N).\label{eq:leakage_memory}$ Collectively, the three reduction procedures transform the original candidate subset into the effective search space $\mathcal K_{\mathrm{eff}} = \mathcal K_{\mathrm{cand}}\cap\mathcal K_{\mathrm{linear}}\cap
\mathcal K_{\mathrm{differential}}\cap\mathcal K_{\mathrm{leakage}}.\label{eq:effective_candidate_space}$ The corresponding reduced search-space cardinality becomes $N_{\mathrm{eff}} = |\mathcal K_{\mathrm{eff}}|.\label{eq:effective_cardinality}$ Consequently, the objective of the present stage is not direct key recovery but progressive candidate-space reduction. The reduced search-space cardinality $N_{\mathrm{eff}}$ subsequently becomes the primary control parameter governing the complexity of the adaptive hybrid cryptanalytic framework and the reduced-space quantum amplification procedure presented in the following section.

\section{Adaptive Quantum-Enhanced Hybrid Cryptanalytic Framework}
\label{sec:adaptive_framework}

The candidate-space construction and cryptanalytic reduction procedures developed in Chapter~3 establish statistically admissible key hypotheses obtained through linear, differential, and leakage-guided cryptanalysis. However, the applicability of individual distinguishers depends upon the practical adversarial environment. In realistic cryptanalytic settings, known-plaintext availability, chosen-plaintext access, leakage observability, and quantum computational capability are not universally available. Consequently, a static attack strategy may invoke inapplicable distinguishers or fail to exploit available information efficiently. To address this limitation, an adaptive hybrid cryptanalytic framework is introduced in which cryptanalytic modules are dynamically activated according to adversarial capabilities and statistical distinguishability conditions. Let the adversarial capability set be represented by $\mathcal{A} = \left\{\mathcal{A}_{\mathrm{known}},\mathcal{A}_{\mathrm{chosen}},\mathcal{A}_{\mathrm{leakage}},\mathcal{A}_{\mathrm{quantum}},\mathcal{A}_{\mathrm{oracle}}\right\},
\label{eq:capability_set}$ where the individual elements respectively denote known-plaintext access, chosen-plaintext capability, leakage observability, quantum computational availability, and reversible oracle-construction feasibility. Rather than activating every cryptanalytic procedure unconditionally, the framework associates each reduction mechanism with a capability-dependent activation predicate. Accordingly, the linear, differential, and leakage-guided activation indicators are defined as

\begin{equation}
\chi_{\mathrm{L}}
=
\mathbf{1}
\!\left[
\mathcal{A}_{\mathrm{known}}
\land
\left(
\epsilon_{\mathrm{linear}}
>
\epsilon_{\min}
\right)
\right],
\label{eq:linear_activation}
\end{equation}

\begin{equation}
\chi_{\mathrm{D}}
=
\mathbf{1}
\!\left[
\mathcal{A}_{\mathrm{chosen}}
\land
\left(
DP
>
DP_{\min}
\right)
\right],
\label{eq:differential_activation}
\end{equation}

\begin{equation}
\chi_{\mathrm{S}}
=
\mathbf{1}
\!\left[
\mathcal{A}_{\mathrm{leakage}}
\land
\left(
\rho_{\mathrm{leakage}}
>
\rho_{\min}
\right)
\right],
\label{eq:leakage_activation}
\end{equation}

where $\mathbf{1}[\cdot]$ denotes the indicator function. The activation indicators determine whether the corresponding reduction procedures defined in Chapter~3 participate in candidate-space refinement. Let $\mathcal{K}_{\mathrm{cand}}$ denote the candidate-key space generated by \textsc{CandidateSpaceConstruction}$(\cdot)$ and let the activated reduction procedures produce candidate subsets $\mathcal{K}_{\mathrm{linear}}$, $\mathcal{K}_{\mathrm{differential}}$, and $\mathcal{K}_{\mathrm{leakage}}$. The adaptive candidate-space refinement process is represented by $\begin{aligned}\mathcal{K}_{\mathrm{eff}} = &
\mathcal{K}_{\mathrm{cand}}\\&\cap\left(\chi_{\mathrm{L}}\mathcal{K}_{\mathrm{linear}}+(1-\chi_{\mathrm{L}})
\mathcal{K}_{\mathrm{cand}}\right)\\&\cap\left(\chi_{\mathrm{D}}\mathcal{K}_{\mathrm{differential}}+(1-\chi_{\mathrm{D}})\mathcal{K}_{\mathrm{cand}}\right)\\
&\cap\left(\chi_{\mathrm{S}}\mathcal{K}_{\mathrm{leakage}}+(1-\chi_{\mathrm{S}})\mathcal{K}_{\mathrm{cand}}
\right),\end{aligned}\label{eq:adaptive_candidate_space}$ which guarantees that only activated distinguishers contribute to candidate-space reduction while inactive modules leave the admissible key space unchanged. The resulting effective search-space cardinality becomes $N_{\mathrm{eff}} = \left|
\mathcal{K}_{\mathrm{eff}}\right|,\label{eq:effective_search_space}$

which represents the reduced cryptanalytic search domain produced through adaptive statistical filtering. If $N_{\mathrm{eff}}=0$, no candidate key simultaneously satisfies the activated distinguishers and the attack terminates unsuccessfully. Otherwise, $\mathcal{K}_{\mathrm{eff}}$ becomes the input to the quantum verification stage. Quantum verification is activated only when both quantum computational resources and reversible oracle-construction capability are available. Accordingly, the quantum activation predicate is defined as

\begin{equation}
\chi_{\mathrm{Q}}
=
\mathbf{1}
\!\left[
\mathcal{A}_{\mathrm{quantum}}
\land
\mathcal{A}_{\mathrm{oracle}}
\right],
\label{eq:quantum_activation}
\end{equation}

where $\chi_{\mathrm{Q}}=1$ indicates that reduced-space quantum verification is feasible.

The effective quantum search dimension is subsequently determined as

\begin{equation}
q
=
\left\lceil
\log_{2}
N_{\mathrm{eff}}
\right\rceil,
\label{eq:effective_qubit_dimension}
\end{equation}

which specifies the minimum number of computational qubits required to encode the reduced candidate-key space. Unlike exhaustive quantum key search operating over the complete key space $\mathcal{K}_{k}$, the proposed framework performs quantum verification exclusively over $\mathcal{K}_{\mathrm{eff}}$, thereby coupling classical cryptanalytic reduction with reduced-space quantum amplitude amplification within a unified capability-adaptive attack model.

Algorithm~\ref{alg:adaptive_hybrid_key_recovery} presents the complete adaptive orchestration procedure responsible for coordinating candidate-space construction, capability-driven cryptanalytic activation, statistical key-space reduction, and reduced-space quantum verification.
\begin{algorithm}[t]
\caption{AdaptiveHybridKeyRecovery$(\mathcal{K}_{k},\mathcal{S},\mathcal{P}_{\Delta},\mathcal{L},\epsilon_{\min},DP_{\min},\rho_{\min},\mathcal{A})$}
\label{alg:adaptive_hybrid_key_recovery}

\REQUIRE{
Complete key space $\mathcal{K}_{k}$,
known plaintext-ciphertext observations $\mathcal{S}$,
chosen plaintext pair set $\mathcal{P}_{\Delta}$,
leakage observation set $\mathcal{L}$,
activation thresholds $\epsilon_{\min}$, $DP_{\min}$, $\rho_{\min}$,
adversarial capability set $\mathcal{A}$
}

\ENSURE{
Recovered key hypothesis $K^{*}$
}

$\mathcal{K}_{\mathrm{cand}}
\leftarrow
\textsc{CandidateSpaceConstruction}
(\mathcal{K}_{k})$\;

Compute
$\chi_{\mathrm{L}}$,
$\chi_{\mathrm{D}}$,
$\chi_{\mathrm{S}}$,
$\chi_{\mathrm{Q}}$
using
Eqs.~(\ref{eq:linear_activation})-
(\ref{eq:quantum_activation})\;
\begin{algorithmic}
\If{$\chi_{\mathrm{L}}=1$}
\state{
$\mathcal{K}_{\mathrm{linear}}
\leftarrow
\textsc{LinearCandidateReduction}
(\mathcal{K}_{\mathrm{cand}},\mathcal{S},\epsilon_{\min})$\;
}
\Else
\state{
$\mathcal{K}_{\mathrm{linear}}
\leftarrow
\mathcal{K}_{\mathrm{cand}}$\;
}
\EndIf
\end{algorithmic}
\begin{algorithmic}
\If{$\chi_{\mathrm{D}}=1$}
\state{$\mathcal{K}_{\mathrm{differential}}
\leftarrow
\textsc{DifferentialCandidateReduction}
(\mathcal{K}_{\mathrm{cand}},\mathcal{P}_{\Delta},DP_{\min})$\;
}
\Else
\state{
$\mathcal{K}_{\mathrm{differential}}
\leftarrow
\mathcal{K}_{\mathrm{cand}}$\;
}
\EndIf
\end{algorithmic}
\begin{algorithmic}
\If{$\chi_{\mathrm{S}}=1$}
\state{
$\mathcal{K}_{\mathrm{leakage}}
\leftarrow
\textsc{LeakageGuidedCandidateReduction}
(\mathcal{K}_{\mathrm{cand}},\mathcal{L},\rho_{\min})$\;
}
\Else
\state
{
$\mathcal{K}_{\mathrm{leakage}}
\leftarrow
\mathcal{K}_{\mathrm{cand}}$\;
}
\EndIf
\end{algorithmic}

$\mathcal{K}_{\mathrm{eff}}
\leftarrow
\mathcal{K}_{\mathrm{cand}}
\cap
\mathcal{K}_{\mathrm{linear}}
\cap
\mathcal{K}_{\mathrm{differential}}
\cap
\mathcal{K}_{\mathrm{leakage}}$\;

$N_{\mathrm{eff}}
\leftarrow
|\mathcal{K}_{\mathrm{eff}}|$\;
\begin{algorithmic}
\If{$N_{\mathrm{eff}}=0$}
\state{
\Return Failure\;
}
\EndIf

\If{$\chi_{\mathrm{Q}}=0$}
\state{
Perform deterministic verification over
$\mathcal{K}_{\mathrm{eff}}$\;

Return valid candidate key\;
}
\EndIf
\end{algorithmic}
$q
\leftarrow
\left\lceil
\log_{2}(N_{\mathrm{eff}})
\right\rceil$\;

Construct reduced-space verification oracle
$U_{f}$ over
$\mathcal{K}_{\mathrm{eff}}$\;

$R
\leftarrow
\left\lfloor
\dfrac{\pi}{4}
\sqrt{N_{\mathrm{eff}}}
\right\rfloor$\;
\begin{algorithmic}
\For{$r=1$ To $R$}
\state{
Apply oracle phase inversion using $U_{f}$\;

Apply diffusion operator $G$\;
}
\EndFor
\end{algorithmic}
Measure computational register\;

Obtain candidate key $K^{*}$\;

Verify
$\mathcal{E}_{K^{*}}(P)=C$\;

\Return $K^{*}$\;

\end{algorithm}
\begin{lemma}
[Adaptive Candidate-Space Contraction under Capability-Driven Filtering]
\label{lem:adaptive_contraction}

Let $\mathcal{K}_{\mathrm{cand}}\subseteq\mathcal{K}_{k}$ denote the candidate-key space generated by \textsc{CandidateSpaceConstruction}$(\cdot)$ and let the activated cryptanalytic modules produce candidate subsets $\mathcal{K}_{\mathrm{linear}}$, $\mathcal{K}_{\mathrm{differential}}$, and $\mathcal{K}_{\mathrm{leakage}}$. Furthermore, let $\chi_{\mathrm{L}},\chi_{\mathrm{D}},\chi_{\mathrm{S}}\in\{0,1\}$ denote the capability-adaptive activation indicators defined in Eqs.~\eqref{eq:linear_activation}-\eqref{eq:leakage_activation}.

Define

\begin{equation}
\mathcal{K}_{\mathrm{L}}^{(\chi_{\mathrm{L}})}
=
\begin{cases}
\mathcal{K}_{\mathrm{linear}},
&
\chi_{\mathrm{L}}=1,
\\
\mathcal{K}_{\mathrm{cand}},
&
\chi_{\mathrm{L}}=0,
\end{cases}
\label{eq:lemma1_linear_set}
\end{equation}

\begin{equation}
\mathcal{K}_{\mathrm{D}}^{(\chi_{\mathrm{D}})}
=
\begin{cases}
\mathcal{K}_{\mathrm{differential}},
&
\chi_{\mathrm{D}}=1,
\\
\mathcal{K}_{\mathrm{cand}},
&
\chi_{\mathrm{D}}=0,
\end{cases}
\label{eq:lemma1_differential_set}
\end{equation}

and

\begin{equation}
\mathcal{K}_{\mathrm{S}}^{(\chi_{\mathrm{S}})}
=
\begin{cases}
\mathcal{K}_{\mathrm{leakage}},
&
\chi_{\mathrm{S}}=1,
\\
\mathcal{K}_{\mathrm{cand}},
&
\chi_{\mathrm{S}}=0.
\end{cases}
\label{eq:lemma1_leakage_set}
\end{equation}

The effective candidate-key space generated by Algorithm~\ref{alg:adaptive_hybrid_key_recovery} is

\begin{equation}
\mathcal{K}_{\mathrm{eff}}
=
\mathcal{K}_{\mathrm{cand}}
\cap
\mathcal{K}_{\mathrm{L}}^{(\chi_{\mathrm{L}})}
\cap
\mathcal{K}_{\mathrm{D}}^{(\chi_{\mathrm{D}})}
\cap
\mathcal{K}_{\mathrm{S}}^{(\chi_{\mathrm{S}})},
\label{eq:lemma1_effective_space}
\end{equation}

which satisfies

\begin{equation}
\mathcal{K}_{\mathrm{eff}}
\subseteq
\mathcal{K}_{\mathrm{cand}}
\subseteq
\mathcal{K}_{k},
\label{eq:lemma1_subset_relation}
\end{equation}

and

\begin{equation}
\left|\mathcal{K}_{\mathrm{eff}}\right|
\le
\min
\Bigl\{
\left|\mathcal{K}_{\mathrm{L}}^{(\chi_{\mathrm{L}})}\right|,
\left|\mathcal{K}_{\mathrm{D}}^{(\chi_{\mathrm{D}})}\right|,
\left|\mathcal{K}_{\mathrm{S}}^{(\chi_{\mathrm{S}})}\right|,
\left|\mathcal{K}_{\mathrm{cand}}\right|
\Bigr\}.
\label{eq:lemma1_cardinality_bound}
\end{equation}

Consequently, capability-adaptive filtering can only preserve or reduce the admissible candidate-key space and can never enlarge the cryptanalytic search domain.
\end{lemma}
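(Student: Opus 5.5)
The proof is elementary set theory applied to the constructions in Algorithms~\ref{alg:candidate_space_construction}--\ref{alg:adaptive_hybrid_key_recovery}. I would proceed in three steps.

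\emph{Step 1: identify $\mathcal{K}_{\mathrm{eff}}$ as computed by Algorithm~\ref{alg:adaptive_hybrid_key_recovery}.} Each conditional branch either assigns the output of the corresponding reduction procedure (when the indicator equals $1$) or assigns $\mathcal{K}_{\mathrm{cand}}$ (when it equals $0$). These assignments agree case by case with the definitions \eqref{eq:lemma1_linear_set}--\eqref{eq:lemma1_leakage_set}. The algorithm then intersects the four resulting sets, which gives \eqref{eq:lemma1_effective_space}.

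I would also remark that the additive expression used earlier in the text, written with the indicators $\chi$, is to be read as exactly this piecewise selection. This remark matters because ``sum'' of sets has no literal meaning here. The only real subtlety in the proof is making this interpretation explicit.

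\emph{Step 2: establish \eqref{eq:lemma1_subset_relation}.} The inclusion $\mathcal{K}_{\mathrm{cand}}\subseteq\mathcal{K}_k$ holds because Algorithm~\ref{alg:candidate_space_construction} only inserts keys $K_i$ drawn from $\mathcal{K}_k$. The inclusion $\mathcal{K}_{\mathrm{eff}}\subseteq\mathcal{K}_{\mathrm{cand}}$ holds because $\mathcal{K}_{\mathrm{cand}}$ is one of the intersected factors, and $A\cap B\subseteq A$.

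It is also useful to note that each of $\mathcal{K}_{\mathrm{linear}},\mathcal{K}_{\mathrm{differential}},\mathcal{K}_{\mathrm{leakage}}$ is a subset of $\mathcal{K}_{\mathrm{cand}}$. This follows because Algorithms~\ref{alg:linear_candidate_reduction}--\ref{alg:leakage_candidate_reduction} iterate only over $\mathcal{K}_{\mathrm{cand}}$ and insert only the element currently being examined. Hence every selected set $\mathcal{K}_{\mathrm{X}}^{(\chi_{\mathrm{X}})}$ lies in $\mathcal{K}_{\mathrm{cand}}$.

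\emph{Step 3: derive the cardinality bound \eqref{eq:lemma1_cardinality_bound}.} An intersection is contained in each of its factors, so $\mathcal{K}_{\mathrm{eff}}\subseteq\mathcal{K}_{\mathrm{X}}^{(\chi_{\mathrm{X}})}$ for $\mathrm{X}\in\{\mathrm{L},\mathrm{D},\mathrm{S}\}$, and also $\mathcal{K}_{\mathrm{eff}}\subseteq\mathcal{K}_{\mathrm{cand}}$. Monotonicity of cardinality for finite sets then gives four separate inequalities, and taking their minimum yields \eqref{eq:lemma1_cardinality_bound}.

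The concluding sentence of the lemma follows directly. For any choice of indicators, $|\mathcal{K}_{\mathrm{eff}}|\le|\mathcal{K}_{\mathrm{cand}}|$, with equality when all indicators vanish. So filtering never enlarges the search domain.

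There is no genuine obstacle in this argument. The point needing care is Step 1, namely reconciling the informal additive notation with the case definitions, and I would handle it by treating the two cases $\chi=0$ and $\chi=1$ separately.
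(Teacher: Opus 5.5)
Your proposal is correct and follows essentially the same argument as the paper. Both show that each selected set $\mathcal{K}_{\mathrm{X}}^{(\chi_{\mathrm{X}})}$ lies in $\mathcal{K}_{\mathrm{cand}}\subseteq\mathcal{K}_k$, that the intersection is contained in every factor, and then use monotonicity of finite cardinality to get the minimum bound; your explicit reading of the additive $\chi$-notation as the piecewise branches of Algorithm~\ref{alg:adaptive_hybrid_key_recovery}, and your direct use of $\mathcal{K}_{\mathrm{cand}}$ as an intersected factor, are clarifications the paper leaves implicit.
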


\begin{proof}

By construction, each activated reduction procedure developed in Chapter~3 eliminates candidate keys that violate a corresponding statistical consistency criterion. Hence, for every activated module,

\[
\mathcal{K}_{i}
\subseteq
\mathcal{K}_{\mathrm{cand}},
\qquad
\mathcal{K}_{i}
\in
\left\{
\mathcal{K}_{\mathrm{linear}},
\mathcal{K}_{\mathrm{differential}},
\mathcal{K}_{\mathrm{leakage}}
\right\}.
\]

For inactive modules, Eqs.~\eqref{eq:lemma1_linear_set}-\eqref{eq:lemma1_leakage_set} assign the corresponding candidate subset to $\mathcal{K}_{\mathrm{cand}}$, thereby preserving the admissible search space without introducing additional hypotheses. Consequently,

\[
\mathcal{K}_{\mathrm{L}}^{(\chi_{\mathrm{L}})}
\subseteq
\mathcal{K}_{\mathrm{cand}},
\qquad
\mathcal{K}_{\mathrm{D}}^{(\chi_{\mathrm{D}})}
\subseteq
\mathcal{K}_{\mathrm{cand}},
\qquad
\mathcal{K}_{\mathrm{S}}^{(\chi_{\mathrm{S}})}
\subseteq
\mathcal{K}_{\mathrm{cand}}.
\]

Taking the intersection of these sets yields Eq.~\eqref{eq:lemma1_effective_space}. Since the intersection of subsets of a set remains a subset of that set,

\[
\mathcal{K}_{\mathrm{eff}}
\subseteq
\mathcal{K}_{\mathrm{cand}}.
\]

Furthermore, $\mathcal{K}_{\mathrm{cand}}\subseteq\mathcal{K}_{k}$ by definition of \textsc{CandidateSpaceConstruction}$(\cdot)$, establishing Eq.~\eqref{eq:lemma1_subset_relation}. For finite sets, the cardinality of an intersection cannot exceed the cardinality of any participating set. Therefore,

\[
\left|\mathcal{K}_{\mathrm{eff}}\right|
\le
\left|\mathcal{K}_{\mathrm{L}}^{(\chi_{\mathrm{L}})}\right|,
\qquad
\left|\mathcal{K}_{\mathrm{eff}}\right|
\le
\left|\mathcal{K}_{\mathrm{D}}^{(\chi_{\mathrm{D}})}\right|,
\qquad
\left|\mathcal{K}_{\mathrm{eff}}\right|
\le
\left|\mathcal{K}_{\mathrm{S}}^{(\chi_{\mathrm{S}})}\right|,
\]

and

\[
\left|\mathcal{K}_{\mathrm{eff}}\right|
\le
\left|\mathcal{K}_{\mathrm{cand}}\right|.
\]

Combining the above inequalities establishes Eq.~\eqref{eq:lemma1_cardinality_bound}. Therefore, adaptive capability-driven filtering is monotonic with respect to candidate-space cardinality and cannot enlarge the admissible cryptanalytic search space.
\end{proof}

Reduced-Space Quantum Verification Bound
\label{lem:reduced_space_quantum_bound}

Let

\begin{equation}
N_{\mathrm{eff}}
=
\left|
\mathcal{K}_{\mathrm{eff}}
\right|
\label{eq:lemma2_neff}
\end{equation}

denote the effective candidate-space cardinality generated by Algorithm~\ref{alg:adaptive_hybrid_key_recovery}, and let

\begin{equation}
U_f:
\mathcal{K}_{\mathrm{eff}}
\rightarrow
\{0,1\}
\label{eq:lemma2_oracle}
\end{equation}

be the reduced-space verification oracle constructed over the admissible candidate-key space. Assume that $M$ candidate states satisfy $f(K_i)=1$, where $M\geq1$. Then the number of amplitude-amplification iterations required to recover a valid key hypothesis with constant success probability satisfies

\begin{equation}
R_{\mathrm{eff}}
=
\theta
\!\left(
\sqrt{
\frac{N_{\mathrm{eff}}}{M}
}
\right),
\label{eq:lemma2_iteration_bound}
\end{equation}

and the corresponding quantum verification complexity becomes

\begin{equation}
T_Q
=
\theta
\!\left(
\sqrt{
\frac{N_{\mathrm{eff}}}{M}
}
\,
\left(
T_{U_f}
+
T_G
\right)
\right),
\label{eq:lemma2_quantum_complexity}
\end{equation}

where $T_{U_f}$ and $T_G$ denote the implementation costs of the verification oracle and diffusion operator, respectively. Furthermore, if

\begin{equation}
N_{\mathrm{eff}}
<
\left|
\mathcal{K}_k
\right|,
\label{eq:lemma2_reduction_condition}
\end{equation}

then

\begin{equation}
T_Q
<
\theta
\!\left(
\sqrt{
\frac{
\left|
\mathcal{K}_k
\right|
}{M}
}
\,
\left(
T_{U_f}
+
T_G
\right)
\right),
\label{eq:lemma2_reduction_bound}
\end{equation}
\begin{lemma}
establishing that adaptive candidate-space reduction strictly decreases the asymptotic quantum verification cost whenever at least one activated cryptanalytic module eliminates invalid key hypotheses.
\end{lemma}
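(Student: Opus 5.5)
The plan is to reduce the statement to the standard amplitude-amplification analysis of Boyer--Brassard--H{\o}yer--Tapp, applied to the uniform superposition over $\mathcal{K}_{\mathrm{eff}}$ rather than over $\mathcal{K}_k$.

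First, I will fix an encoding of $\mathcal{K}_{\mathrm{eff}}$ into the $q=\lceil\log_2 N_{\mathrm{eff}}\rceil$-qubit register of Eq.~\eqref{eq:effective_qubit_dimension}. I will prepare the state $\ket{\psi}=N_{\mathrm{eff}}^{-1/2}\sum_{K\in\mathcal{K}_{\mathrm{eff}}}\ket{K}$ with a state-preparation unitary $A$, whose cost I absorb into $T_G$. The diffusion operator is then $G=2\ket{\psi}\bra{\psi}-I=A(2\ket{0}\bra{0}-I)A^{\dagger}$.

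Next, I will write $\ket{\psi}=\sin\theta\ket{\mathrm{good}}+\cos\theta\ket{\mathrm{bad}}$ with $\sin^2\theta=M/N_{\mathrm{eff}}$. The nonemptiness of the good set is exactly where $M\ge1$ is used, and it requires that the true key survives filtering, i.e.\ lies in $\mathcal{K}_{\mathrm{eff}}$. The iterate $GU_f$ acts as a rotation by $2\theta$ in the two-dimensional span of these states. After $r$ iterations the success probability is therefore $\sin^2((2r+1)\theta)$.

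From this, the iteration count follows in two directions.
\begin{itemize}
\item \emph{Upper bound.} Choosing $r=\lfloor\pi/(4\theta)\rfloor$ gives success probability at least $1-M/N_{\mathrm{eff}}$. For $M\le N_{\mathrm{eff}}/2$ this is a constant, and the case $M>N_{\mathrm{eff}}/2$ is handled by classical sampling with $O(1)$ work. Since $\theta\ge\sin\theta=\sqrt{M/N_{\mathrm{eff}}}$, this gives $R_{\mathrm{eff}}=O(\sqrt{N_{\mathrm{eff}}/M})$.
\item \emph{Lower bound.} Constant success probability needs $(2r+1)\theta=\Omega(1)$. Because $\theta\le(\pi/2)\sin\theta$, this forces $r=\Omega(\sqrt{N_{\mathrm{eff}}/M})$. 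Optimality can alternatively be cited from the BBBV lower bound, applied to the restricted search problem.
\end{itemize}

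Together these establish Eq.~\eqref{eq:lemma2_iteration_bound}. Multiplying by the per-iteration cost $T_{U_f}+T_G$, plus a single final classical check $\mathcal{E}_{K^*}(P)=C$ that is dominated by it, gives Eq.~\eqref{eq:lemma2_quantum_complexity}.

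For the comparison, I will apply the same analysis to the full space $\mathcal{K}_k$ with the same $M$. I will then use the monotonicity of $x\mapsto\sqrt{x/M}$ together with $N_{\mathrm{eff}}<|\mathcal{K}_k|$, which holds by Lemma~\ref{lem:adaptive_contraction} whenever an activated module removes a hypothesis.

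The main obstacle is that Eq.~\eqref{eq:lemma2_reduction_bound} compares $\theta$-classes, and strict inequality between asymptotic classes is not meaningful. I will therefore restate it concretely in terms of exact iteration counts: $\lfloor\frac{\pi}{4}\sqrt{N_{\mathrm{eff}}/M}\rfloor\le\lfloor\frac{\pi}{4}\sqrt{|\mathcal{K}_k|/M}\rfloor$, with equal per-iteration cost. I will only claim an \emph{asymptotic} gain, with ratio $\sqrt{N_{\mathrm{eff}}/|\mathcal{K}_k|}\to0$, when $N_{\mathrm{eff}}=o(|\mathcal{K}_k|)$.

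Two further caveats need to be stated explicitly. First, $T_{U_f}$ and $T_G$ are comparable between the two settings only if the reduced-space state preparation costs no more than a Hadamard layer on $k$ qubits. Second, the inequality can fail to be strict because of the floor, so for small spaces only non-strict inequality holds.
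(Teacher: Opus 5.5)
Your proposal is correct and follows the paper's own route. The paper likewise writes the uniform superposition over $\mathcal{K}_{\mathrm{eff}}$ as $\sin\theta\,|w\rangle+\cos\theta\,|r\rangle$ with $\sin\theta=\sqrt{M/N_{\mathrm{eff}}}$, uses the $2\theta$-rotation to obtain $R=\Theta(\sqrt{N_{\mathrm{eff}}/M})$ and $T_Q=R\,(T_{U_f}+T_G)$, and then combines Lemma~\ref{lem:adaptive_contraction} with monotonicity of the square root. Your version is more careful than the paper's: the paper reads the $\Theta$-bound off the approximation $(2R+1)\theta\approx\pi/2$ and gets strictness simply by substituting $N_{\mathrm{eff}}<|\mathcal{K}_k|$ into the $\Theta$-expression. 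Your remarks are valid and are not addressed in the paper: strictness only makes sense for the explicit quantity $\sqrt{N/M}\,(T_{U_f}+T_G)$ with fixed constants and comparable per-iteration costs, and floored iteration counts can coincide (e.g.\ $N=4095$ and $N=4096$ both give $50$ iterations for $M=1$).
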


\begin{proof}

Let

\begin{equation}
\mathcal{H}
=
\mathrm{span}
\left\{
|K_0\rangle,
|K_1\rangle,
\ldots,
|K_{N_{\mathrm{eff}}-1}\rangle
\right\}
\label{eq:lemma2_hilbert_space}
\end{equation}

denote the Hilbert space generated by the admissible candidate-key set. The reduced-space initialization stage prepares the uniform superposition

\begin{equation}
|s\rangle
=
\frac{1}{\sqrt{N_{\mathrm{eff}}}}
\sum_{K_i\in\mathcal{K}_{\mathrm{eff}}}
|K_i\rangle.
\label{eq:lemma2_uniform_superposition}
\end{equation}

Partition $\mathcal{H}$ into the marked and unmarked subspaces

\begin{equation}
\mathcal{H}_{M}
=
\mathrm{span}
\left\{
|K_i\rangle :
f(K_i)=1
\right\},
\qquad
\mathcal{H}_{U}
=
\mathrm{span}
\left\{
|K_i\rangle :
f(K_i)=0
\right\},
\label{eq:lemma2_partition}
\end{equation}

and let $|w\rangle$ and $|r\rangle$ denote normalized basis states spanning $\mathcal{H}_{M}$ and $\mathcal{H}_{U}$, respectively. The initial state can therefore be expressed as

\begin{equation}
|s\rangle
=
\sqrt{
\frac{M}{N_{\mathrm{eff}}}
}
\,|w\rangle
+
\sqrt{
1-
\frac{M}{N_{\mathrm{eff}}}
}
\,|r\rangle.
\label{eq:lemma2_state_decomposition}
\end{equation}

Define

\begin{equation}
\sin\theta
=
\sqrt{
\frac{M}{N_{\mathrm{eff}}}
}.
\label{eq:lemma2_theta}
\end{equation}

Each amplitude-amplification iteration performs a rotation of angle $2\theta$ within the invariant two-dimensional subspace generated by $|w\rangle$ and $|r\rangle$. Consequently, after $R$ iterations the state becomes

\begin{equation}
|\psi_R\rangle
=
\sin
\!\bigl(
(2R+1)\theta
\bigr)
|w\rangle
+
\cos
\!\bigl(
(2R+1)\theta
\bigr)
|r\rangle.
\label{eq:lemma2_rotated_state}
\end{equation}

To maximize the success probability, the marked-state amplitude must satisfy

\begin{equation}
(2R+1)\theta
\approx
\frac{\pi}{2}.
\label{eq:lemma2_success_condition}
\end{equation}

Substituting Eq.~\eqref{eq:lemma2_theta} into Eq.~\eqref{eq:lemma2_success_condition} yields

\begin{equation}
R
=
\theta
\!\left(
\sqrt{
\frac{N_{\mathrm{eff}}}{M}
}
\right),
\label{eq:lemma2_iteration_proof}
\end{equation}

thereby establishing Eq.~\eqref{eq:lemma2_iteration_bound}. Since each iteration invokes exactly one oracle operation and one diffusion operation, the total quantum verification cost is

\begin{equation}
T_Q
=
R
\left(
T_{U_f}
+
T_G
\right).
\label{eq:lemma2_cost_model}
\end{equation}

Substituting Eq.~\eqref{eq:lemma2_iteration_proof} into Eq.~\eqref{eq:lemma2_cost_model} establishes Eq.~\eqref{eq:lemma2_quantum_complexity}.

Furthermore, from Lemma~\ref{lem:adaptive_contraction},

\begin{equation}
N_{\mathrm{eff}}
\le
\left|
\mathcal{K}_k
\right|.
\label{eq:lemma2_contraction_reference}
\end{equation}

Whenever at least one activated reduction procedure removes invalid candidate keys,

\[
N_{\mathrm{eff}}
<
\left|
\mathcal{K}_k
\right|.
\]

Substituting this inequality into Eq.~\eqref{eq:lemma2_quantum_complexity} directly yields Eq.~\eqref{eq:lemma2_reduction_bound}. Therefore, adaptive candidate-space contraction translates directly into reduced oracle-evaluation requirements and reduced quantum verification complexity.

\end{proof}

\begin{theorem}[Adaptive Hybrid Cryptanalytic Complexity]
\label{thm:adaptive_hybrid_complexity}

Let Algorithm~\ref{alg:adaptive_hybrid_key_recovery} operate over the complete cryptographic key space $\mathcal{K}_{k}$ and let

\[
N_{\mathrm{eff}}
=
\left|
\mathcal{K}_{\mathrm{eff}}
\right|
\]

denote the effective candidate-space cardinality obtained after adaptive capability-driven filtering. Furthermore, let

\[
T_{\mathrm{CSC}},
\quad
T_{\mathrm{linear}},
\quad
T_{\mathrm{differential}},
\quad
T_{\mathrm{leakage}}
\]

represent the computational complexities of candidate-space construction, linear cryptanalytic reduction, differential cryptanalytic reduction, and leakage-guided reduction, respectively, as derived in Chapter~3. Then the total computational complexity of the adaptive hybrid cryptanalytic framework is

\begin{equation}
T_{\mathrm{AH}}
=
T_{\mathrm{CSC}}
+
\chi_{\mathrm{L}}
T_{\mathrm{linear}}
+
\chi_{\mathrm{D}}
T_{\mathrm{differential}}
+
\chi_{\mathrm{S}}
T_{\mathrm{leakage}}
+
T_Q,
\label{eq:theorem_total_complexity}
\end{equation}

where

\begin{equation}
T_Q
=
\theta
\!\left(
\sqrt{
\frac{N_{\mathrm{eff}}}{M}
}
\,
\left(
T_{U_f}
+
T_G
\right)
\right)
\label{eq:theorem_quantum_component}
\end{equation}

denotes the reduced-space quantum verification complexity and $M$ denotes the number of valid candidate-key hypotheses. Furthermore,

\begin{equation}
T_{\mathrm{AH}}
\le
T_{\mathrm{CSC}}
+
\chi_{\mathrm{L}}
T_{\mathrm{linear}}
+
\chi_{\mathrm{D}}
T_{\mathrm{differential}}
+
\chi_{\mathrm{S}}
T_{\mathrm{leakage}}
+
\theta
\!\left(
\sqrt{
\frac{
\left|
\mathcal{K}_{k}
\right|
}{M}
}
\,
\left(
T_{U_f}
+
T_G
\right)
\right),
\label{eq:theorem_upper_bound}
\end{equation}

with strict inequality whenever at least one activated cryptanalytic reduction stage eliminates invalid candidate-key hypotheses.

\end{theorem}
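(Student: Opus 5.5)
The proof decomposes the cost of Algorithm~\ref{alg:adaptive_hybrid_key_recovery} stage by stage. The algorithm runs its stages sequentially, so its running time is the sum of the stage costs.

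\textbf{Classical part.} The first step always invokes \textsc{CandidateSpaceConstruction}, contributing $T_{\mathrm{CSC}}$ from Chapter~3. Computing the activation predicates $\chi_{\mathrm{L}},\chi_{\mathrm{D}},\chi_{\mathrm{S}},\chi_{\mathrm{Q}}$ in Eqs.~\eqref{eq:linear_activation}--\eqref{eq:quantum_activation} takes $\theta(1)$ time, which is absorbed into $T_{\mathrm{CSC}}$.

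For each of the three reduction modules I would split into two cases:
\begin{itemize}
\item If the indicator equals $1$, the corresponding procedure (Algorithm~\ref{alg:linear_candidate_reduction}, \ref{alg:differential_candidate_reduction} or \ref{alg:leakage_candidate_reduction}) is executed at cost $T_{\mathrm{linear}}$, $T_{\mathrm{differential}}$ or $T_{\mathrm{leakage}}$.
\item If the indicator equals $0$, the algorithm only makes the assignment $\mathcal{K}_{\bullet}\leftarrow\mathcal{K}_{\mathrm{cand}}$, which costs $O(1)$ with a pointer or reference.
\end{itemize}
In both cases the module contributes exactly $\chi_{\bullet}T_{\bullet}$ up to constant overhead. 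The intersection forming $\mathcal{K}_{\mathrm{eff}}$ costs $O(|\mathcal{K}_{\mathrm{cand}}|)$ with hashed membership tests, and this is dominated by $T_{\mathrm{CSC}}=\theta(2^k)$.

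\textbf{Quantum part.} For the verification stage I would invoke the reduced-space bound of Lemma~\ref{lem:reduced_space_quantum_bound}, which yields Eq.~\eqref{eq:theorem_quantum_component}. The algorithm performs $R=\lfloor \tfrac{\pi}{4}\sqrt{N_{\mathrm{eff}}}\rfloor$ Grover iterations; this is consistent with $\theta(\sqrt{N_{\mathrm{eff}}/M})$ for $M=\theta(1)$. For general $M$ I would state that the iteration count is set from $M$ as in the lemma. Preparing the register and the final measurement and classical check $\mathcal{E}_{K^*}(P)=C$ add $O(q)+T_E(n)$, which is again absorbed. In the branch $\chi_{\mathrm{Q}}=0$, deterministic verification costs $O(N_{\mathrm{eff}}T_E(n))\le O(2^k T_E(n))$, which is absorbed into $T_{\mathrm{CSC}}$. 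I would note explicitly that Eq.~\eqref{eq:theorem_total_complexity} is therefore read as the $\chi_{\mathrm{Q}}=1$ cost, or up to constants. Summing all contributions gives Eq.~\eqref{eq:theorem_total_complexity}.

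\textbf{Upper bound.} Lemma~\ref{lem:adaptive_contraction} gives $N_{\mathrm{eff}}\le|\mathcal{K}_{\mathrm{cand}}|\le|\mathcal{K}_k|$. Since $x\mapsto\sqrt{x/M}$ is monotone, $T_Q$ is at most the full-space Grover cost, and substituting this into Eq.~\eqref{eq:theorem_total_complexity} gives Eq.~\eqref{eq:theorem_upper_bound}.

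For the strict inequality, if some activated stage removes an invalid hypothesis then $N_{\mathrm{eff}}<|\mathcal{K}_k|$, and the strict form of Lemma~\ref{lem:reduced_space_quantum_bound}, Eq.~\eqref{eq:lemma2_reduction_bound}, applies while the classical terms are identical on both sides.

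\textbf{Main obstacle.} The hard step is the strict inequality, because $\theta(\cdot)$ notation hides constants, so a strict comparison between two $\theta$-expressions is not meaningful in general. I would handle this by replacing the $\theta$ terms with the concrete iteration counts $\lfloor\tfrac{\pi}{4}\sqrt{N/M}\rfloor(T_{U_f}+T_G)$, with the same per-iteration cost on both sides, and arguing monotonicity for these explicit quantities. Even then, strictness holds only up to the floor function, that is, whenever $\sqrt{N_{\mathrm{eff}}/M}$ and $\sqrt{|\mathcal{K}_k|/M}$ round to different integers. I would state this caveat, or assume a reduction large enough that $\tfrac{\pi}{4}\bigl(\sqrt{|\mathcal{K}_k|/M}-\sqrt{N_{\mathrm{eff}}/M}\bigr)\ge 1$.
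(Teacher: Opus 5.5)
Your proposal follows the paper's proof essentially step for step. Like the paper, you sum the stage costs weighted by the activation indicators, invoke Lemma~2 for $T_Q$, and use Lemma~1 together with the (strict) monotonicity of $\sqrt{\cdot}$ to get the upper bound and its strict form.

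The extra points you raise are legitimate refinements that the paper's proof glosses over, not departures from it:
\begin{itemize}
\item absorbing $T_{\mathrm{ACT}}$, the intersection cost and the $\chi_{\mathrm{Q}}=0$ branch;
\item observing that the algorithm's $\lfloor\frac{\pi}{4}\sqrt{N_{\mathrm{eff}}}\rfloor$ iterations match $\theta(\sqrt{N_{\mathrm{eff}}/M})$ only when $M=\theta(1)$;
\item noting that a strict inequality between $\theta$-expressions is meaningful only for explicit iteration counts, and even then only when the floors differ.
\end{itemize}
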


\begin{proof}

Algorithm~\ref{alg:adaptive_hybrid_key_recovery} consists of five computational stages:

\begin{enumerate}
\item candidate-space construction,
\item linear cryptanalytic reduction,
\item differential cryptanalytic reduction,
\item leakage-guided reduction,
\item reduced-space quantum verification.
\end{enumerate}

The first four stages contribute

\[
T_{\mathrm{CSC}},
\qquad
\chi_{\mathrm{L}}
T_{\mathrm{linear}},
\qquad
\chi_{\mathrm{D}}
T_{\mathrm{differential}},
\qquad
\chi_{\mathrm{S}}
T_{\mathrm{leakage}},
\]

respectively, where the activation indicators $\chi_{\mathrm{L}},\chi_{\mathrm{D}},\chi_{\mathrm{S}}\in\{0,1\}$ ensure that inactive modules contribute zero computational cost.

By Lemma~\ref{lem:adaptive_contraction},

\begin{equation}
\mathcal{K}_{\mathrm{eff}}
\subseteq
\mathcal{K}_{\mathrm{cand}}
\subseteq
\mathcal{K}_{k},
\label{eq:theorem_subset_relation}
\end{equation}

which implies

\begin{equation}
N_{\mathrm{eff}}
=
\left|
\mathcal{K}_{\mathrm{eff}}
\right|
\le
\left|
\mathcal{K}_{k}
\right|.
\label{eq:theorem_neff_bound}
\end{equation}

Furthermore, Lemma~\ref{lem:reduced_space_quantum_bound} establishes that the quantum verification stage requires

\[
T_Q
=
\theta
\!\left(
\sqrt{
\frac{N_{\mathrm{eff}}}{M}
}
\,
\left(
T_{U_f}
+
T_G
\right)
\right).
\]

Substituting the individual stage complexities yields Eq.~\eqref{eq:theorem_total_complexity}.

Using Eq.~\eqref{eq:theorem_neff_bound}, we obtain

\[
\sqrt{
\frac{N_{\mathrm{eff}}}{M}
}
\le
\sqrt{
\frac{
\left|
\mathcal{K}_{k}
\right|
}{M}
},
\]

which directly establishes Eq.~\eqref{eq:theorem_upper_bound}. Finally, if at least one activated reduction procedure eliminates an invalid candidate-key hypothesis, then

\[
N_{\mathrm{eff}}
<
\left|
\mathcal{K}_{k}
\right|.
\]

Since the square-root function is strictly monotonic over the positive real numbers,

\[
\sqrt{
\frac{N_{\mathrm{eff}}}{M}
}
<
\sqrt{
\frac{
\left|
\mathcal{K}_{k}
\right|
}{M}
},
\]

which implies

\[
T_Q
<
\theta
\!\left(
\sqrt{
\frac{
\left|
\mathcal{K}_{k}
\right|
}{M}
}
\,
\left(
T_{U_f}
+
T_G
\right)
\right).
\]

Therefore the adaptive hybrid framework achieves strictly lower quantum verification complexity whenever at least one activated cryptanalytic reduction stage contracts the admissible candidate-key space.
\end{proof}

\subsection{Computational complexity analysis}
\label{sec:complexity_analysis}

The computational complexity of Algorithm~\ref{alg:adaptive_hybrid_key_recovery} is obtained by aggregating the costs of candidate-space construction, capability evaluation, activated cryptanalytic reductions, candidate-space refinement, and reduced-space quantum verification. Let $T_{\mathrm{AH}}$ denote the total execution cost of the adaptive hybrid framework. The candidate-space construction stage invokes \textsc{CandidateSpaceConstruction}$(\cdot)$ over $\mathcal{K}_{k}$ and contributes

\begin{equation}
T_{\mathrm{CSC}}
=
\theta\!\left(
2^{k}T_D(n)
\right),
\label{eq:complexity_csc}
\end{equation}

where $k$ denotes the cryptographic key length and $T_D(n)$ denotes the computational cost of the underlying distinguisher evaluation over an $n$-bit observation. The capability-evaluation stage computes the activation indicators $\chi_{\mathrm{L}}$, $\chi_{\mathrm{D}}$, $\chi_{\mathrm{S}}$, and $\chi_{\mathrm{Q}}$. Since only a constant number of threshold comparisons are performed,

\begin{equation}
T_{\mathrm{ACT}}
=
\theta(1).
\label{eq:complexity_activation}
\end{equation}

When activated, the linear cryptanalytic reduction contributes

\begin{equation}
T_{\mathrm{linear}}
=
\theta\!\left(
N_KN
\right),
\label{eq:complexity_linear}
\end{equation}

where $N_K=|\mathcal{K}_{\mathrm{cand}}|$ and $N$ denotes the number of known plaintext-ciphertext observations. Similarly, the differential reduction stage contributes

\begin{equation}
T_{\mathrm{differential}}
=
\theta\!\left(
N_KNn
\right),
\label{eq:complexity_differential}
\end{equation}

where $n$ denotes the cipher block length and $N$ denotes the number of chosen plaintext pairs used during differential evaluation. The leakage-guided reduction stage contributes

\begin{equation}
T_{\mathrm{leakage}}
=
\theta\!\left(
N_KN
\right),
\label{eq:complexity_leakage}
\end{equation}

where $N$ denotes the number of leakage traces processed during correlation analysis. The adaptive refinement stage constructs $\mathcal{K}_{\mathrm{eff}}$ through set intersections. Let $N_{\mathrm{eff}}=|\mathcal{K}_{\mathrm{eff}}|$. The corresponding complexity satisfies

\begin{equation}
T_{\mathrm{INT}}
=
\theta\!\left(
N_{\mathrm{eff}}
\right).
\label{eq:complexity_intersection}
\end{equation}

From Lemma~\ref{lem:reduced_space_quantum_bound}, the reduced-space quantum verification stage requires

\begin{equation}
R_{\mathrm{eff}}
=
\theta\!\left(
\sqrt{\frac{N_{\mathrm{eff}}}{M}}
\right),
\label{eq:complexity_iterations}
\end{equation}

oracle-amplification iterations, where $M$ denotes the number of valid candidate-key hypotheses. Let $T_{U_f}$ and $T_G$ denote the implementation costs of the verification oracle and diffusion operator, respectively. The resulting quantum verification complexity becomes

\begin{equation}
T_Q
=
\theta\!\left(
\sqrt{\frac{N_{\mathrm{eff}}}{M}}
\,
\left(
T_{U_f}+T_G
\right)
\right).
\label{eq:complexity_quantum}
\end{equation}

Combining Eqs.~\eqref{eq:complexity_csc}-\eqref{eq:complexity_quantum}, the overall adaptive hybrid complexity is

\begin{equation}
\begin{aligned}
T_{\mathrm{AH}}
=
&
T_{\mathrm{CSC}}
+
T_{\mathrm{ACT}}
+
\chi_{\mathrm{L}}T_{\mathrm{linear}}
+
\chi_{\mathrm{D}}T_{\mathrm{differential}}
\\
&
+
\chi_{\mathrm{S}}T_{\mathrm{leakage}}
+
T_{\mathrm{INT}}
+
T_Q.
\end{aligned}
\label{eq:complexity_total}
\end{equation}

Substituting Eqs.~\eqref{eq:complexity_csc}, \eqref{eq:complexity_activation}, \eqref{eq:complexity_linear}, \eqref{eq:complexity_differential}, \eqref{eq:complexity_leakage}, \eqref{eq:complexity_intersection}, and \eqref{eq:complexity_quantum} into Eq.~\eqref{eq:complexity_total} yields

\begin{equation}
\begin{aligned}
T_{\mathrm{AH}}
=
&
\theta\!\left(
2^{k}T_D(n)
\right)
+
\theta(1)
+
\chi_{\mathrm{L}}
\theta\!\left(
N_KN
\right)
\\
&
+
\chi_{\mathrm{D}}
\theta\!\left(
N_KNn
\right)
+
\chi_{\mathrm{S}}
\theta\!\left(
N_KN
\right)
\\
&
+
\theta\!\left(
N_{\mathrm{eff}}
\right)
+
\theta\!\left(
\sqrt{\frac{N_{\mathrm{eff}}}{M}}
\,
\left(
T_{U_f}+T_G
\right)
\right).
\end{aligned}
\label{eq:complexity_expanded}
\end{equation}

Using Lemma~\ref{lem:adaptive_contraction}, $N_{\mathrm{eff}}\leq N_K\leq 2^{k}$. Therefore,

\begin{equation}
T_Q
\leq
\theta\!\left(
\sqrt{\frac{2^{k}}{M}}
\,
\left(
T_{U_f}+T_G
\right)
\right),
\label{eq:complexity_upper_quantum}
\end{equation}

with strict inequality whenever at least one activated reduction procedure removes invalid candidate-key hypotheses. Consequently, the adaptive framework replaces full-space quantum verification over $\mathcal{K}_{k}$ by reduced-space verification over $\mathcal{K}_{\mathrm{eff}}$, yielding a lower quantum verification cost whenever $N_{\mathrm{eff}}<2^{k}$. Neglecting lower-order terms, the asymptotic complexity of the proposed framework is therefore

\begin{equation}
T_{\mathrm{AH}}
=
\theta\!\left(
2^{k}T_D(n)
+
\chi_{\mathrm{L}}N_KN
+
\chi_{\mathrm{D}}N_KNn
+
\chi_{\mathrm{S}}N_KN
+
\sqrt{\frac{N_{\mathrm{eff}}}{M}}
\,
\left(
T_{U_f}+T_G
\right)
\right).
\label{eq:complexity_final}
\end{equation}

\subsection{Illustrative reduced-space quantum realization of the adaptive hybrid framework}

To illustrate the reduced-space quantum verification stage of Algorithm~\ref{alg:adaptive_hybrid_key_recovery}, consider a cryptanalytic scenario in which the activated reduction functions \textsc{LinearCandidateReduction}$(\cdot)$, \textsc{DifferentialCandidateReduction}$(\cdot)$, and \textsc{LeakageGuidedCandidateReduction}$(\cdot)$ have already been executed. Following the adaptive filtering process, the framework produces the effective candidate-key space

\begin{equation}
\mathcal{K}_{\mathrm{eff}}
=
\left\{
K_1,K_4,K_7,K_9,K_{13},K_{18},K_{22},K_{27}
\right\},
\label{eq:effective_example_space}
\end{equation}

where $K_{18}$ denotes the unique key hypothesis satisfying all statistical distinguishers and cryptographic verification conditions. Therefore,

\begin{equation}
N_{\mathrm{eff}}
=
|\mathcal{K}_{\mathrm{eff}}|
=
8.
\label{eq:effective_example_cardinality}
\end{equation}

According to Lemma~\ref{lem:adaptive_contraction}, $\mathcal{K}_{\mathrm{eff}}\subseteq\mathcal{K}_{k}$, while Lemma~\ref{lem:reduced_space_quantum_bound} guarantees that the subsequent quantum verification stage operates only on the reduced candidate space rather than the complete cryptographic key space. Since $N_{\mathrm{eff}}=8$, the quantum search register requires $q=\lceil\log_2(N_{\mathrm{eff}})\rceil=3$ qubits. The surviving candidate-key hypotheses are encoded according to

\begin{equation}
\begin{aligned}
|000\rangle &\leftrightarrow K_1,
&
|001\rangle &\leftrightarrow K_4,
&
|010\rangle &\leftrightarrow K_7,
&
|011\rangle &\leftrightarrow K_9,
\\
|100\rangle &\leftrightarrow K_{13},
&
|101\rangle &\leftrightarrow K_{18},
&
|110\rangle &\leftrightarrow K_{22},
&
|111\rangle &\leftrightarrow K_{27}.
\end{aligned}
\label{eq:encoding}
\end{equation}

Assume that $K_{18}$ represents the unique key hypothesis satisfying all cryptanalytic consistency constraints together with the final ciphertext verification condition. Consequently, $K^{*}=K_{18}\leftrightarrow |101\rangle$. Unlike conventional full-space quantum search formulations, each computational basis state corresponds to a cryptanalytically admissible key hypothesis generated by the adaptive reduction framework. Hence, amplitude amplification is performed exclusively within the statistically filtered candidate space. Let the quantum search register be defined as $R_K=\{q_2,q_1,q_0\}$. The initial quantum state is $|\psi_0\rangle=[1,0,0,0,0,0,0,0]^T$. Application of the state-preparation layer $H^{\otimes3}$ generates a uniform superposition over all surviving candidate-key hypotheses,

\[
|\psi_1\rangle
=
\frac{1}{\sqrt8}
[1,1,1,1,1,1,1,1]^T.
\]

The corresponding transformation matrix is

\begin{equation}
H^{\otimes3}
=
\frac{1}{\sqrt8}
\begin{bmatrix}
1&1&1&1&1&1&1&1\\
1&-1&1&-1&1&-1&1&-1\\
1&1&-1&-1&1&1&-1&-1\\
1&-1&-1&1&1&-1&-1&1\\
1&1&1&1&-1&-1&-1&-1\\
1&-1&1&-1&-1&1&-1&1\\
1&1&-1&-1&-1&-1&1&1\\
1&-1&-1&1&-1&1&1&-1
\end{bmatrix},
\label{eq:hadamard_tensor}
\end{equation}

yielding the state $|\psi_1\rangle$ above. The reduced-space verification oracle evaluates

\begin{equation}
f(K_i)
=
\begin{cases}
1,
&
K_i\in\mathcal{K}_{\mathrm{eff}}
\;\wedge\;
E_{K_i}(P)=C,
\\[1mm]
0,
&
\text{otherwise}.
\end{cases}
\label{eq:oracle_function}
\end{equation}

For the present example, only $K_{18}$ satisfies the verification criterion. The oracle therefore performs phase inversion exclusively on state $|101\rangle$, yielding

\begin{equation}
U_f
=
\mathrm{diag}(1,1,1,1,1,-1,1,1).
\label{eq:oracle_matrix}
\end{equation}

Application of the oracle produces

\[
|\psi_2\rangle
=
\frac{1}{\sqrt8}
[1,1,1,1,1,-1,1,1]^T.
\]

Defining $|s\rangle=\frac{1}{\sqrt8}\sum_{i=0}^{7}|i\rangle$, the diffusion operator becomes

\begin{equation}
G
=
2|s\rangle\langle s|-I
=
\frac{1}{4}
\begin{bmatrix}
-3&1&1&1&1&1&1&1\\
1&-3&1&1&1&1&1&1\\
1&1&-3&1&1&1&1&1\\
1&1&1&-3&1&1&1&1\\
1&1&1&1&-3&1&1&1\\
1&1&1&1&1&-3&1&1\\
1&1&1&1&1&1&-3&1\\
1&1&1&1&1&1&1&-3
\end{bmatrix}.
\label{eq:diffusion_matrix}
\end{equation}

Applying the diffusion transformation gives

\[
|\psi_3\rangle
=
G|\psi_2\rangle
=
\frac{1}{4\sqrt8}
[1,1,1,1,1,11,1,1]^T.
\]

The probability of recovering the target candidate key becomes

\begin{equation}
P(K_{18})
=
\left(
\frac{11}{4\sqrt8}
\right)^2
=
\frac{121}{128}
\approx
0.9453.
\label{eq:success_probability}
\end{equation}

The reduced-space amplification operator is $Q=GU_f$, and the complete quantum evolution may be summarized as

\begin{equation}
|\psi_0\rangle
\xrightarrow{H^{\otimes3}}
|\psi_1\rangle
\xrightarrow{U_f}
|\psi_2\rangle
\xrightarrow{G}
|\psi_3\rangle
\xrightarrow{\mathrm{Measure}}
|101\rangle
\leftrightarrow
K_{18}.
\label{eq:full_evolution}
\end{equation}

The corresponding Hamiltonian interpretation follows directly from the oracle and diffusion operators. The oracle Hamiltonian is

\begin{equation}
H_f
=
|101\rangle\langle101|,
\label{eq:oracle_hamiltonian}
\end{equation}

yielding

\begin{equation}
U_f
=
e^{-i\pi H_f},
\label{eq:oracle_hamiltonian_unitary}
\end{equation}

while the diffusion Hamiltonian is $H_s=|s\rangle\langle s|$, with

\begin{equation}
G
=
2H_s-I.
\label{eq:diffusion_hamiltonian_relation}
\end{equation}

Consequently, the adaptive framework first contracts the admissible candidate-key space through classical cryptanalytic reductions and subsequently performs quantum amplitude amplification only over the surviving candidate hypotheses. This realization directly illustrates the mechanism established in Lemma~\ref{lem:adaptive_contraction}, the reduced verification complexity derived in Lemma~\ref{lem:reduced_space_quantum_bound}, and the overall adaptive hybrid complexity characterized in Theorem~\ref{thm:adaptive_hybrid_complexity}. It therefore provides a physically realizable reduced-space quantum verification procedure corresponding to Algorithm~\ref{alg:adaptive_hybrid_key_recovery}.

\section{Quantum-Assisted Observation Sampling for Statistical Cryptanalysis}
The effectiveness of practical cryptanalytic procedures is often influenced by the quality and diversity of the observations available for statistical analysis. In realistic attack environments, complete evaluation of all available observations may be computationally inefficient or experimentally unnecessary. Consequently, observation selection is frequently performed during statistical bias estimation, differential-pair evaluation, leakage-trace analysis, model training, and hypothesis verification. Such selection procedures naturally arise within linear cryptanalysis, differential cryptanalysis, and side-channel analysis, where only a subset of the available observations may be processed during a particular evaluation stage. \cite{bib20}

\begin{figure*}[!t]
\centering
\includegraphics[width=\textwidth]{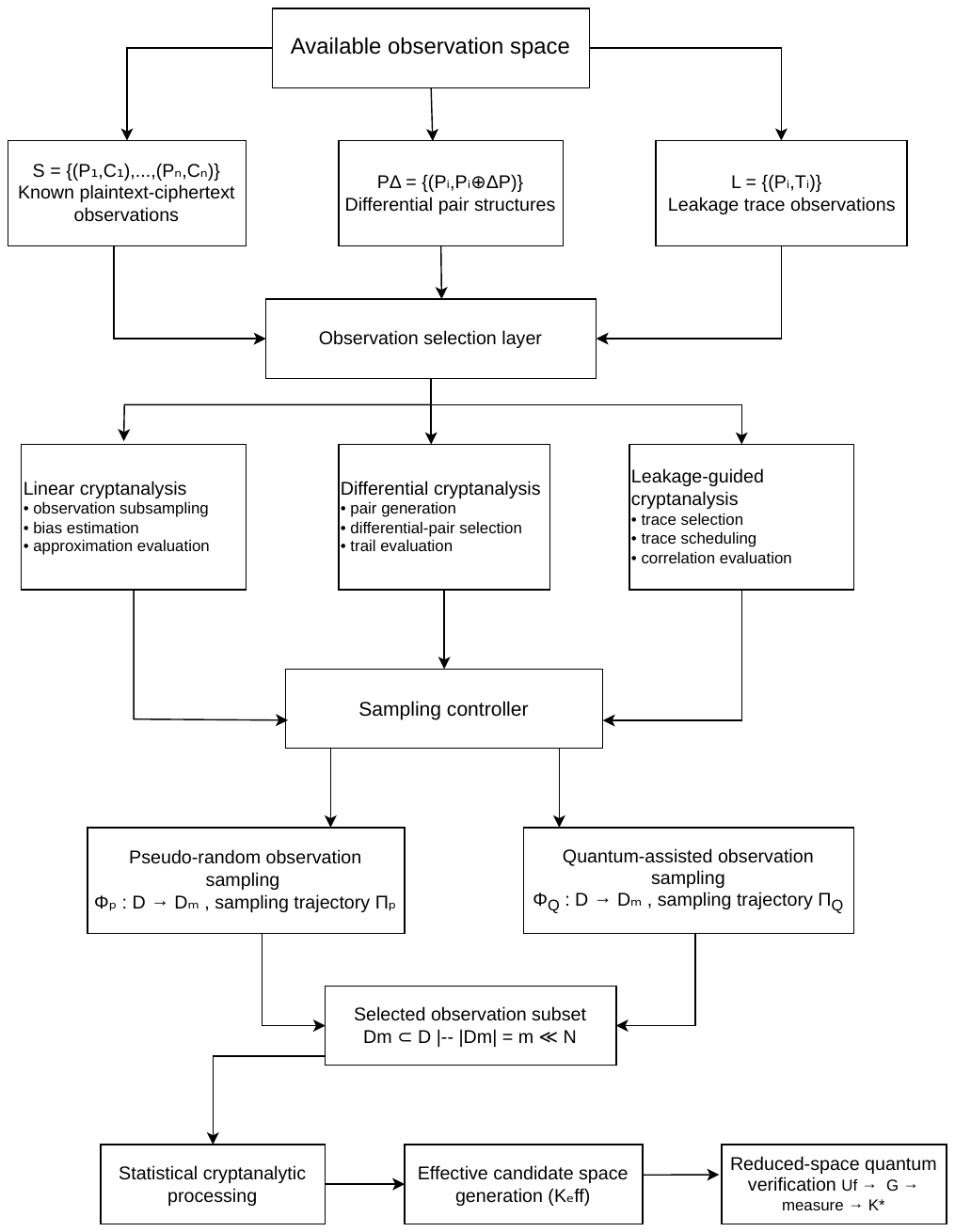}
\caption{
Quantum-assisted observation sampling within capability-adaptive cryptanalysis. The framework incorporates quantum-generated randomness exclusively during observation selection, differential-pair selection, and leakage-trace selection stages without altering the subsequent candidate-space reduction or reduced-space quantum verification procedures. \cite{bib20}
}
\label{fig:Quantum-assisted observation sampling within capability-adaptive cryptanalysis}
\end{figure*}

Let the complete observation space be represented by
\begin{equation}
\mathcal{D} =
\left\{
d_{1},
d_{2},
\ldots,
d_{N}
\right\},
\label{eq:observation_space}
\end{equation}
where each observation $d_i \in \mathcal{D}$ may correspond to a plaintext-ciphertext observation, a differential pair, or a leakage trace. A sampling operator $\Phi$ selects a reduced observation subset $\mathcal{D}_{m}\subseteq\mathcal{D}$ satisfying
\begin{equation}
\Phi:
\mathcal{D}
\rightarrow
\mathcal{D}{m},
\qquad
|\mathcal{D}{m}| = m,
\qquad
m \ll N.
\label{eq:sampling_operator}
\end{equation}
Within the proposed framework, observation selection may be required during linear approximation evaluation over the plaintext-ciphertext observation set
\begin{equation}
\mathcal{S} =
\left\{
(P_{1},C_{1}),
(P_{2},C_{2}),
\ldots,
(P_{N},C_{N})
\right\},
\label{eq:linear_observation_set}
\end{equation}
during differential-pair analysis over the structured pair set
\begin{equation}
\mathcal{P}_{\Delta} =
\left\{
(P_{i}, P_{i} \oplus \Delta P) \;|\; i = 1,2,\ldots,N
\right\},
\label{eq:differential_pair_set}
\end{equation}
and during leakage-guided evaluation over the trace collection
\begin{equation}
\mathcal{L}
\left\{
(P_{i},\tau_{i})
\right\},
\label{eq:leakage_trace_set}
\end{equation}
where $\tau_i$ denotes the measured leakage trace associated with observation $P_i$.
Conventional implementations typically employ pseudo-random selection mechanisms, producing a deterministic sampling trajectory
\begin{equation}
\Pi_{P}
\left\{
d_{i_{1}},
d_{i_{2}},
\ldots,
d_{i_{m}}
\right\}.
\label{eq:prng_sampling}
\end{equation}
Since the resulting trajectory is generated through deterministic algorithmic state evolution, the corresponding observation-selection process remains reproducible whenever the generator state becomes known.
To eliminate deterministic sampling dependence, the sampling operator may instead be driven by a quantum random number generator. \cite{bib19} Let
\begin{equation}
\mathcal{R}_{Q}
\left\{
q_{1},
q_{2},
\ldots,
q_{m}
\right\}
\label{eq:qrng_sequence}
\end{equation}
denote a sequence of quantum-generated random samples obtained from a physical entropy source. The corresponding observation-selection trajectory becomes
\begin{equation}
\Pi_{Q}
\left\{
d_{q_{1}},
d_{q_{2}},
\ldots,
d_{q_{m}}
\right\}.
\label{eq:qrng_sampling}
\end{equation}
The objective of quantum-assisted observation sampling is not to modify the asymptotic complexity of the underlying cryptanalytic procedure, nor to increase the cryptographic strength of the target cipher. Rather, its purpose is to provide a non-deterministic and unbiased mechanism for observation selection, thereby reducing potential sampling artefacts during statistical estimation, differential-pair evaluation, leakage-trace analysis, and candidate-space exploration.
Accordingly, the proposed framework permits the integration of quantum-generated randomness as an auxiliary observation-selection mechanism whenever statistical subsampling is required. \cite{bib23} Since the sampling stage operates independently of the candidate-space reduction procedures and the subsequent reduced-space quantum verification stage, the overall computational complexity of the framework remains unchanged. \cite{bib21} Nevertheless, the resulting observation subset $\mathcal{D}_{m}$ becomes independent of deterministic pseudo-random state evolution and may therefore provide a more neutral basis for cryptanalytic evaluation. The overall integration of quantum-assisted observation sampling within the proposed capability-adaptive cryptanalytic framework is illustrated in Figure~\ref{fig:qrng_sampling_framework}.

\section{Results}
To evaluate the proposed framework under controlled and reproducible conditions, an initial candidate-key hypothesis space containing (4{,}096) candidates was constructed. For each candidate $(K_i)$, three cryptanalytic evaluation metrics were recorded: linear approximation bias $(b_i)$, differential characteristic probability $(P_i)$, and leakage correlation coefficient $(\rho_i)$. These metrics were generated using bounded Gaussian distributions chosen to emulate the statistical variability commonly encountered in practical cryptanalytic evaluations. Specifically, $(b_i\sim\mathcal N(0.045,0.018^2))$, $(P_i\sim\mathcal N(0.32,0.12^2))$, and $(\rho_i\sim\mathcal N(0.52,0.16^2))$, with all samples constrained to their corresponding admissible ranges. Binary acceptance indicators $(\chi_L(K_i))$, $(\chi_D(K_i))$, and $(\chi_S(K_i))$ were subsequently derived using threshold values $(\tau_b=0.065)$, $(\tau_P=0.45), and (\tau_\rho=0.65)$, respectively. A candidate was retained within the effective candidate space only if it satisfied all active cryptanalytic criteria, i.e., $(\chi_F(K_i)=\chi_L(K_i)\land\chi_D(K_i)\land\chi_S(K_i))$. The resulting dataset serves as the basis for evaluating candidate-space contraction, adaptive filtering behavior, and reduced-space quantum verification within the proposed framework.

Application of the threshold-based cryptanalytic filters produced substantial candidate-space contraction. Starting from an initial hypothesis space of $4{,}096$ candidates, the linear, differential, and leakage-guided evaluations retained $557$, $528$, and $854$ candidates, respectively. The adaptive intersection stage subsequently reduced the admissible candidate space to only $13$ hypotheses satisfying all active cryptanalytic criteria. Consequently, the effective candidate space represents approximately $0.317\%$ of the original hypothesis space, corresponding to a reduction of approximately $99.683\%$. These results as shown in table~\ref{tab:candidate_space_contraction} indicate that the combined filtering strategy substantially contracts the candidate-key search space prior to quantum verification, thereby supporting the reduced-space verification model developed in Section~4.

\begin{table}[htbp]
\centering
\caption{Candidate-space contraction statistics}
\label{tab:candidate_space_contraction}

\small
\renewcommand{\arraystretch}{1.15}

\begin{tabular}{lccc}
\toprule

\textbf{Stage} &
\textbf{Candidate Count} &
\textbf{Retention Ratio} &
\textbf{Reduction Ratio}
\\

\midrule

Initial      & 4096 & 1.0000  & 0.0000  \\
Linear       & 557  & 0.1360  & 0.8640  \\
Differential & 528  & 0.1289  & 0.8711  \\
Leakage      & 854  & 0.2085  & 0.7915  \\
Effective    & 13   & 0.00317 & 0.99683 \\

\bottomrule
\end{tabular}

\end{table}

Figure~\ref{fig:candidate_space_contraction} illustrates the progressive contraction of the candidate-key hypothesis space produced by the adaptive cryptanalytic filtering framework.

\begin{figure}[htbp]
\centering
\includegraphics[width=0.90\linewidth]{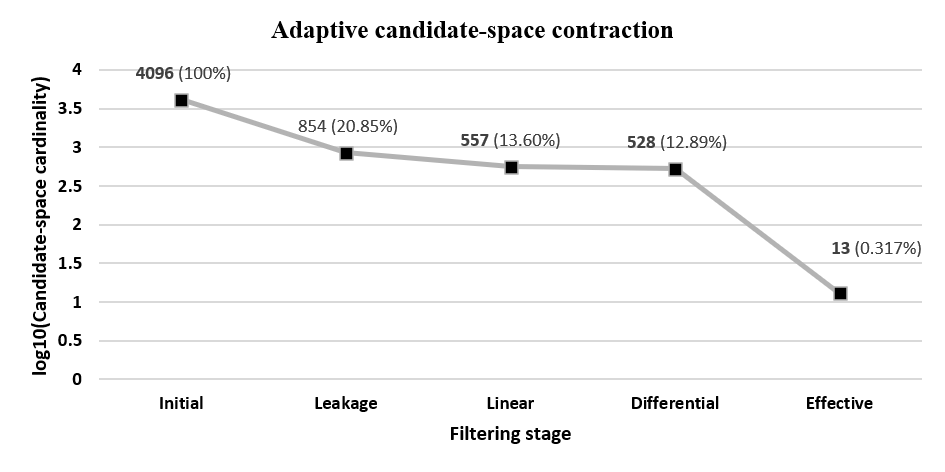}
\caption{
Progressive contraction of the candidate-key hypothesis space under the adaptive cryptanalytic filtering framework. Starting from an initial space of 4096 hypotheses, successive filtering stages substantially reduce the admissible candidate set, yielding an effective candidate space of only 13 hypotheses prior to quantum verification.
}
\label{fig:candidate_space_contraction}
\end{figure}

To quantify the impact of adaptive candidate-space contraction on quantum verification complexity, the Grover iteration requirement was evaluated as a function of the candidate-space cardinality. For a search space containing (N) candidate hypotheses, the required number of Grover iterations was computed using $R(N)=\left\lfloor \frac{\pi}{4}\sqrt{N}\right\rfloor.$

Unlike conventional quantum key-search formulations that operate over the complete search space, the proposed framework performs quantum verification only on the effective candidate space $(\mathcal{K}*{\mathrm{eff}})$ produced by the adaptive filtering stage. Consequently, the quantum verification complexity becomes dependent on $(N*{\mathrm{eff}})$ rather than the original candidate-space size (N). To illustrate this effect, Grover iteration statistics were evaluated for progressively reduced candidate spaces ranging from $(N=4096)$ to the experimentally obtained effective candidate space $(N_{\mathrm{eff}}=13)$. For each candidate-space size, the corresponding iteration count $(R(N))$, relative verification cost $(R(N)/R(4096))$, and reduction factor $(R(4096)/R(N))$ were computed. The resulting statistics provide a direct quantitative measure of the reduction in quantum verification effort achieved through adaptive candidate-space contraction.

Table~\ref{tab:quantum_verification_statistics}
summarizes the Grover iteration requirements and the corresponding verification-cost reduction achieved as the candidate-space size contracts.

\begin{table}[htbp]
\centering
\caption{Reduced-space quantum verification statistics}
\label{tab:quantum_verification_statistics}

\small
\renewcommand{\arraystretch}{1.15}

\begin{tabular}{cccc}
\toprule
\textbf{Candidate Space} &
\textbf{Grover} &
\textbf{Relative} &
\textbf{Reduction} \\
\textbf{Size ($N$)} &
\textbf{Iterations $R(N)$} &
\textbf{Cost} &
\textbf{Factor} \\
\midrule

4096 & 50 & 1.000 & 1.000 \\
2048 & 35 & 0.700 & 1.429 \\
1024 & 25 & 0.500 & 2.000 \\
512  & 17 & 0.340 & 2.941 \\
256  & 12 & 0.240 & 4.167 \\
128  & 8  & 0.160 & 6.250 \\
64   & 6  & 0.120 & 8.333 \\
32   & 4  & 0.080 & 12.500 \\
16   & 3  & 0.060 & 16.667 \\
13 ($N_{\mathrm{eff}}$) & 2 & 0.040 & 25.000 \\

\bottomrule
\end{tabular}
\end{table}

The results presented in table~\ref{tab:quantum_verification_statistics} demonstrate a substantial reduction in quantum verification complexity as the candidate space contracts. For the initial candidate space containing $4096$ hypotheses, the Grover search procedure requires $50$ iterations. \cite{bib15} Following adaptive cryptanalytic reduction, the effective candidate space contains only $13$ candidates, reducing the verification requirement to $2$ iterations. This corresponds to a relative verification cost of $0.04$ and a $25\times$ reduction in quantum verification effort. These observations are consistent with Lemma~2 and Theorem~1, which predict that candidate-space contraction directly translates into reduced quantum verification complexity. The results therefore provide empirical support for the reduced-space verification model underlying the proposed adaptive hybrid framework.

To evaluate the effectiveness of reduced-space quantum verification, the success probability of the target candidate hypothesis was analyzed under successive Grover amplification rounds. Using the reduced candidate space containing $(N_{\mathrm{eff}}=8)$ hypotheses, the target state probability was computed according to the standard amplitude-amplification relation $(P_r=\sin^2((2r+1)\theta))$, where $(\sin^2\theta=1/N_{\mathrm{eff}})$. The resulting statistics characterize the evolution of target-state amplitude during quantum verification and provide a quantitative illustration of the amplification process underlying the reduced-space search procedure.

\begin{table}[htbp]
\centering
\caption{Oracle success probability during reduced-space amplitude amplification}
\label{tab:oracle_success_probability}

\small
\renewcommand{\arraystretch}{1.15}

\begin{tabular}{cc}
\toprule

\textbf{Amplification Round ($r$)}
&
\textbf{Success Probability ($P_r$)}
\\

\midrule

0 & 0.125000 \\
1 & 0.781250 \\
2 & 0.945313 \\
3 & 0.330078 \\

\bottomrule
\end{tabular}

\end{table}

The results shown in table~\ref{tab:oracle_success_probability} demonstrate the expected behavior of quantum amplitude amplification. Beginning from an initial success probability of $0.125$, corresponding to a uniform superposition over the reduced candidate space, successive Grover iterations rapidly increase the probability of observing the correct candidate hypothesis. The maximum success probability occurs after two amplification rounds, reaching approximately $0.945$, after which additional iterations begin to overshoot the optimal amplitude configuration. This behavior is consistent with the reduced-space quantum verification model developed in Section~4 and illustrates how adaptive candidate-space contraction enables highly efficient quantum verification over a substantially smaller search space.

\section{Conclusion and Future Scope}\label{sec8}

This work presented a capability-adaptive cryptanalytic framework that integrates classical cryptanalytic filtering with reduced-space quantum verification. The proposed methodology combines linear cryptanalysis, differential cryptanalysis, and side-channel leakage analysis within a unified candidate-space reduction architecture, enabling cryptanalytic evidence from multiple attack modalities to be incorporated into a common decision framework. \cite{bib6} Rather than performing quantum verification over the complete key-search space, the framework constructs an effective candidate space through adaptive filtering and subsequently applies quantum amplitude amplification only to the surviving candidate hypotheses.

A formal mathematical characterization of the framework was developed through the candidate-space reduction model, adaptive verification algorithm, complexity analysis, and supporting theoretical results. The derived lemmas and theorem establish that candidate-space contraction directly reduces the verification effort required during the quantum search stage while preserving the admissible cryptanalytic solution space. The Hamiltonian formulation further provides a physically realizable interpretation of the reduced-space quantum verification procedure.

Experimental evaluation using statistically generated cryptanalytic observations demonstrated substantial candidate-space contraction. Starting from an initial hypothesis space containing $4096$ candidate keys, the adaptive filtering stage reduced the admissible candidate space to only $13$ hypotheses, corresponding to an overall reduction of approximately $99.683\%$. Consequently, the Grover verification requirement decreased from $50$ iterations to only $2$ iterations, yielding an approximately $25$-fold reduction in verification effort. Reduced-space amplitude amplification further demonstrated successful concentration of probability mass onto the target candidate hypothesis.

Future work may investigate integration of additional cryptanalytic modalities, including algebraic cryptanalysis, fault analysis, and protocol-level observations within the adaptive reduction framework. The proposed architecture may also be extended toward dynamic capability-aware cryptanalytic orchestration, larger-scale quantum verification workflows, and execution on contemporary quantum hardware platforms to further evaluate practical performance under realistic cryptanalytic conditions.
\bibliography{sn-bibliography}
\end{document}